\documentclass[pra,onecolumn,groupedaddress,superscriptaddress,amsmath,amssymb,a4paper]{revtex4}
\usepackage{geometry}
\geometry{a4paper,left=15mm,right=13mm,top=15mm,bottom=35mm}
\usepackage{graphicx}
\usepackage{mathrsfs}
\usepackage{amssymb}
\usepackage{amsmath}
\usepackage{amsthm}
\usepackage{amsbsy}
\usepackage{bm}
\usepackage{hyperref}
\usepackage[T1]{fontenc}
\usepackage{color}
\newtheorem{proposition}{Proposition}

\newtheorem{lemma}{Lemma}

\theoremstyle{definition}

\newtheorem{example}{Example}
\newtheorem{remark}{Remark}

\newcommand{\bra}[1]{\langle #1|}
\newcommand{\ket}[1]{| #1 \rangle }

\begin{document}

\title{Phase covariant qubit dynamics and divisibility}

\author{S. N. Filippov}

\affiliation{Steklov Mathematical Institute of Russian Academy of
Sciences, Gubkina St. 8, Moscow 119991, Russia}
\affiliation{Moscow Institute of Physics and Technology,
Institutskii Per. 9, Dolgoprudny 141700, Russia}
\affiliation{Valiev Institute of Physics and Technology of Russian
Academy of Sciences, Nakhimovskii Pr. 34, Moscow 117218, Russia}

\author{A. N. Glinov}
\affiliation{Steklov Mathematical Institute of Russian Academy of
Sciences, Gubkina St. 8, Moscow 119991, Russia}
\affiliation{Moscow Institute of Physics and Technology,
Institutskii Per. 9, Dolgoprudny 141700, Russia}

\author{L. Lepp\"{a}j\"{a}rvi}
\affiliation{QTF Centre of Excellence, Department of Physics and
Astronomy, University of Turku, Turku 20014, Finland}

\begin{abstract}
Phase covariant qubit dynamics describes an evolution of a
two-level system under simultaneous action of pure dephasing,
energy dissipation, and energy gain with time-dependent rates
$\gamma_z(t)$, $\gamma_-(t)$, and $\gamma_+(t)$, respectively.
Non-negative rates correspond to completely positive divisible
dynamics, which can still exhibit such peculiarities as
non-monotonicity of populations for \textit{any} initial state. We
find a set of quantum channels attainable in the completely
positive divisible phase covariant dynamics and show that this set
coincides with the set of channels attainable in semigroup phase
covariant dynamics. We also construct new examples of eternally
indivisible dynamics with $\gamma_z(t) < 0$ for all $t > 0$ that
is neither unital nor commutative. Using the quantum Sinkhorn
theorem, we for the first time derive a restriction on the
decoherence rates under which the dynamics is positive divisible,
namely, $\gamma_{\pm}(t) \geq 0$, $\sqrt{\gamma_+(t) \gamma_-(t)}
+ 2 \gamma_z(t)
> 0$. Finally, we consider phase covariant convolution master equations
and find a class of admissible memory kernels that guarantee
complete positivity of the dynamical map.
\end{abstract}

\maketitle

\section{Introduction}
Quantum theory has a well defined statistical
structure~\cite{holevo-st}, where quantum states are associated
with density operators $\varrho$ in a Hilbert space ${\cal H}$,
i.e., positive semidefinite operators with unit trace. Hereafter
we consider a finite-dimensional Hilbert space ${\cal H}_d$ of
dimension $d$. We denote the set of bounded (or linear) operators
acting on ${\cal H}_d$ by ${\cal B}({\cal H}_d)$. Provided the
system state is decoupled from its environment at time $t=0$, any
physical evolution of the quantum system is described by a linear
quantum dynamical map $\Phi(t): {\cal B}({\cal H}_d) \to {\cal
B}({\cal H}_d)$, $t \geq 0$, satisfying the properties of complete
positivity (CP), trace preservation, and the initial condition
$\Phi(0) = {\rm Id}$, where ${\rm Id}$ stands for the identity
transformation~\cite{davies-1976,breuer-book,holevo-book,heinosaari-ziman}.
In the Schr\"{o}dinger picture of the system-environment
evolution,
\begin{equation} \label{Phi-U}
\varrho(t) = \Phi(t) [\varrho(0)] = {\rm tr}_{\rm env} \left[ U(t)
\, \varrho(0) \otimes \xi(0) \, U^{\dag}(t) \right],
\end{equation}

\noindent where $\xi(0) \in {\cal B}({\cal H}^{\rm env})$ is the
initial density operator of the environment, ${\rm tr}_{\rm env}$
is a partial trace over environmental degrees of freedom, and
$U(t) \in {\cal B}({\cal H}_d \otimes {\cal H}^{\rm env})$ is a
unitary evolution operator. The dimension of the effective
reservoir for a general dynamics is estimated in
Ref.~\cite{luchnikov-2019}. Eq.~\eqref{Phi-U} automatically
guarantees that $\Phi(t)$ is completely positive, i.e., the map
$\Phi(t) \otimes {\rm Id}_k$ is positive for all identity
transformations ${\rm Id}_k: {\cal B}({\cal H}_k) \to {\cal
B}({\cal H}_k)$, and $\Phi(t)$ is trace preserving, i.e., ${\rm
tr}\big[\Phi(t)[X]\big] = {\rm tr}[X]$ for all $X \in {\cal
B}({\cal H}_d)$.

An important class of quantum dynamical maps is Markov semigroups
$\Phi(t) = e^{t L}$, Ref.~\cite{alicki-book}. Complete positivity
of $\Phi(t)$ forces the generator $L: {\cal B}({\cal H}_d) \to
{\cal B}({\cal H}_d)$ to be of a special
Gorini-Kossakowski-Sudarshan-Lindblad (GKSL)
form~\cite{gks-1976,lindblad-1976}:
\begin{equation} \label{gksl}
L[\varrho] = - i [H, \varrho] + \sum_k \gamma_k \left( A_k \varrho
A_k^{\dag} - \frac{1}{2} \{\varrho, A_k^{\dag} A_k \} \right),
\end{equation}

\noindent where $H = H^{\dag}$, $\gamma_k \geq 0$ are the
decoherence rates, $A_k: {\cal H}_d \to {\cal H}_d$ are jump
operators, $[\cdot,\cdot]$ and $\{\cdot,\cdot\}$ denote the
commutator and anticommutator, respectively. Markov semigroups
turned out to be an adequate description of open quantum dynamics
in the weak-coupling limit~\cite{van-hove-1954,davies-1974}, the
singular-coupling limit~\cite{palmer-1977,gorini-1978}, the
stochastic limit~\cite{accardi-book}, the low-density limit and
monitoring approach for gas
environment~\cite{dumcke-1985,accardi-1991,apv-2002,hornberger-2007,smirne-2010},
the stroboscopic limit in the collision
model~\cite{rau-1963,scarani-2002,luchnikov-2017,filippov-2019}.
However, it is worth mentioning that the Markov semigroup $e^{t
L}$ cannot be exactly reproduced by Eq.~\eqref{Phi-U} with unitary
operator $U(t) = e^{- i H' t}$ unless all $\gamma_k = 0$ or the
spectrum of the system-environment Hamiltonian $H'$ is unbounded
from below~\cite{exner-book}.

Covariance of a quantum dynamical map $\Phi(t)$ with respect to a
unitary representation $g \rightarrow V_g \in {\cal B}({\cal
H}_d)$ of group $G$, $g \in G$, means that there exists a unitary
representation $g \rightarrow W_g$, $g \in G$, in ${\cal H}_d$
such that~\cite{holevo-1993,holevo-1996}
\begin{equation} \label{covariance}
\Phi[V_g \varrho V_g^{\dag}] = W_g \Phi[\varrho] W_g^{\dag}
\end{equation}

\noindent for all $g \in G$ and all density operators $\varrho$.
Covariance implies some particular structure on the dynamical map
$\Phi(t)$~\cite{fk-2019} and, in the special case of the Markov
semigroup, on the generator
$L$~\cite{holevo-1993,holevo-1996,vacchini-2010}.

\begin{figure}
\includegraphics[width=8cm]{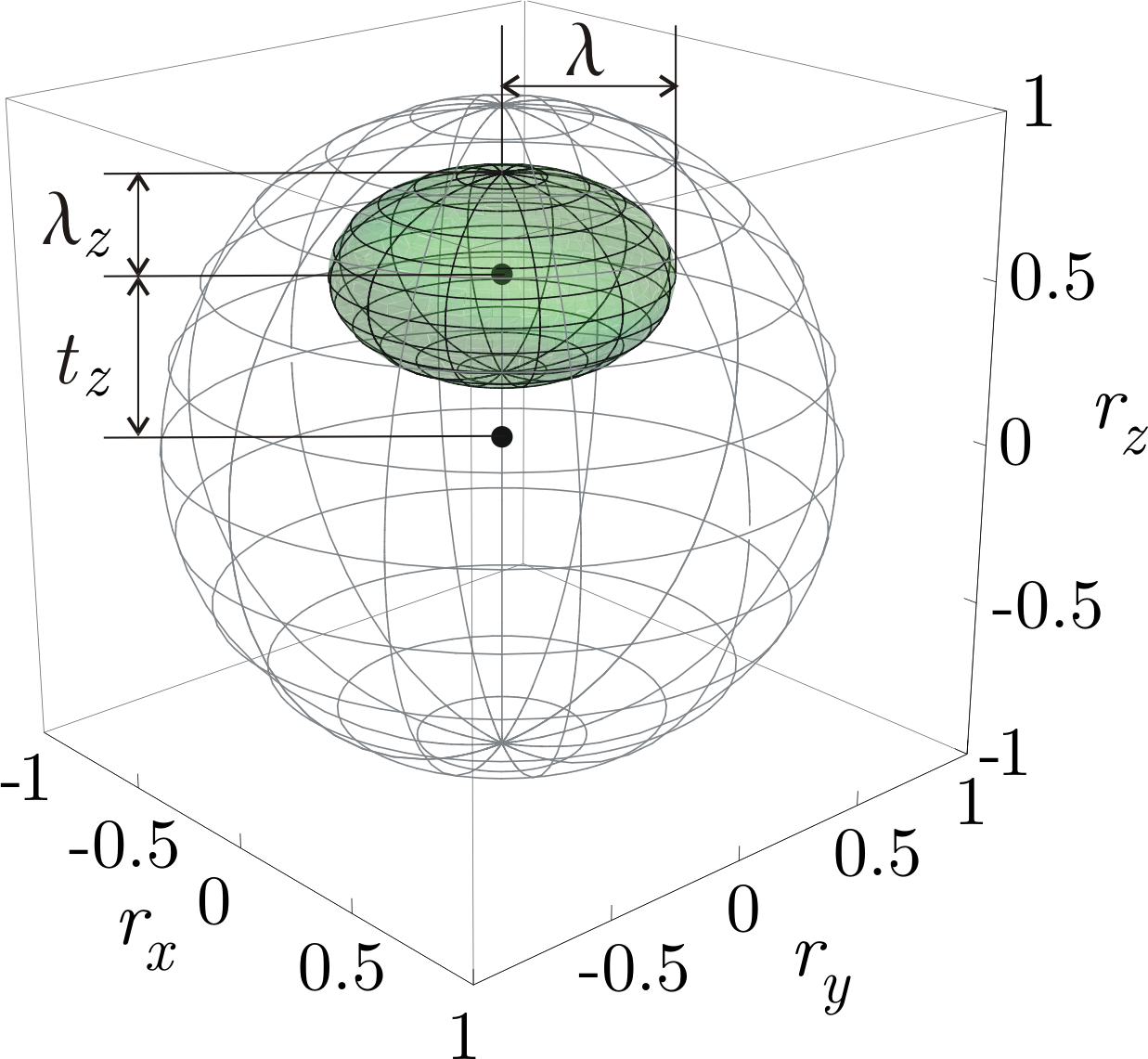}
\caption{\label{figure-1} Bloch ball transformation under the
action of the phase covariant map~\eqref{Phi-channel}.}
\end{figure}

In this paper, we consider phase covariant quantum dynamical maps
for two-level systems (qubits, $d=2$) that satisfy the relation
$\exp( - i \sigma_z \varphi) \Phi[\varrho] \exp( i \sigma_z
\varphi) = \Phi[\exp( - i \sigma_z \varphi) \varrho \exp( i
\sigma_z \varphi)]$ for all real $\varphi$. Hereafter,
$\sigma_x,\sigma_y,\sigma_z \in {\cal B}({\cal H}_2)$ are the
conventional Pauli operators and $\sigma_0 = I$ is the identity
operator on ${\cal H}_2$. Up to an irrelevant transformation
$\varrho \rightarrow \exp( - i \sigma_z \theta) \varrho \exp( i
\sigma_z \theta)$, $\theta \in \mathbb{R}$, the phase covariant
qubit dynamical map $\Phi(t)$
reads~\cite{smirne-2016,lankinen-2016,haase-2018,teittinen-2018,haase-2019}
\begin{equation} \label{Phi-phase-covariant-qubit}
\Phi(t) [\varrho] = \frac{1}{2} \left\{ {\rm tr}[\varrho] \big( I
+ t_z(t) \sigma_z \big) + \lambda(t) {\rm tr}[\sigma_x \varrho]
\sigma_x + \lambda(t) {\rm tr}[\sigma_y \varrho] \sigma_y +
\lambda_z(t) {\rm tr}[\sigma_z \varrho] \sigma_z \right\}
\end{equation}

\noindent and is fully characterized by three real-valued
functions $\lambda(t)$, $\lambda_z(t)$, and $t_z(t)$. The
trace-preservation condition for $\Phi(t)$ is automatically
fulfilled, whereas the complete positivity of $\Phi(t)$ is
equivalent to positivity of the Choi state $\Omega_{\Phi(t)} =
\big( \Phi(t) \otimes {\rm Id} \big) [\ket{\psi_+} \bra{\psi_+}]
\in {\cal B}({\cal H}_4)$, where ${\cal H}_4 \ni \ket{\psi_+} =
\frac{1}{\sqrt{2}} ( \ket{0} \otimes \ket{0} + \ket{1} \otimes
\ket{1} )$, with $\{\ket{0},\ket{1}\}$ being an orthonormal basis
in ${\cal H}_2$, see, e.g., Ref.~\cite{ruskai-2002}. Direct
calculation shows that $\Phi(t)$ is completely positive, and hence
a valid dynamical map, if and only if
\begin{equation} \label{CP}
|\lambda_z(t)| + |t_z(t)| \leq 1 \quad \text{and} \quad 4
\lambda^2(t) + t_z^2(t) \leq [1 + \lambda_z(t)]^2.
\end{equation}

For a fixed time $t \geq 0$ the map $\Phi(t):{\cal B}({\cal H}_2)
\to {\cal B}({\cal H}_2)$ is a phase covariant qubit channel that
we will further denote by $\Phi$ for brevity. $\Phi$ is given by
three real parameters $\lambda$, $\lambda_z$, and $t_z$,
\begin{equation} \label{Phi-channel}
\Phi [\varrho] = \frac{1}{2} \left\{ {\rm tr}[\varrho] \big( I +
t_z \sigma_z \big) + \lambda {\rm tr}[\sigma_x \varrho] \sigma_x +
\lambda {\rm tr}[\sigma_y \varrho] \sigma_y + \lambda_z {\rm
tr}[\sigma_z \varrho] \sigma_z \right\}.
\end{equation}

\noindent The action of $\Phi$ on the set of qubit density
operators has a clear geometrical meaning. Any qubit density
operator $\varrho$ is parameterized by a Bloch vector ${\bf r} \in
\mathbb{R}^3$ inside the Bloch ball $|{\bf r}| \leq 1$, namely,
$\varrho = \frac{1}{2}(I + r_x \sigma_x + r_y \sigma_y + r_z
\sigma_z)$. Action of $\Phi$ on $\varrho$ leads to an affine
transformation: $r_x \to \lambda r_x$, $r_y \to \lambda r_y$, $r_z
\to \lambda_z r_z + t_z$. In other words, the Bloch ball is
contracted into an ellipsoid of revolution with the principal
semi-axes $|\lambda|$, $|\lambda|$, and $|\lambda_z|$ and shifted
by $t_z$ along the $z$-axis, see Fig.~\ref{figure-1}.
Conditions~\eqref{CP} for complete positivity of $\Phi$ are
visualized in Fig.~\ref{figure-2}.

\begin{figure}
\includegraphics[width=8cm]{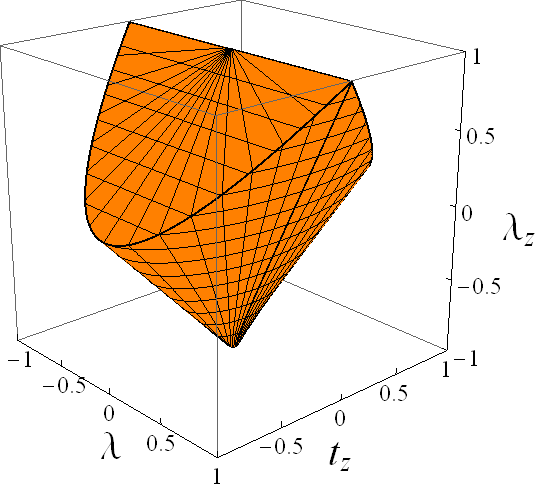}
\caption{\label{figure-2} Region of parameters
$(\lambda,\lambda_z,t_z)$ for which the map~\eqref{Phi-channel} is
completely positive.}
\end{figure}

The important feature of the phase covariant qubit dynamical maps
is that the concatenation of two such maps $\Phi_1(t)$ and
$\Phi_2(t)$ is again phase covariant, however, $\Phi_1(t)\Phi_2(t)
\neq \Phi_2(t) \Phi_1(t)$, i.e., phase covariant qubit dynamical
maps are not commutative in general. In particular,
$\Phi(t)\Phi(s) \neq \Phi(s)\Phi(t)$ for a general phase covariant
dynamics $\Phi(t)$ and different time moments $s$ and $t$.

Much attention has been recently paid to divisibility of quantum
dynamical maps and the study of its relation with non-Markovianity
(see the
reviews~\cite{rivas-2014,breuer-2016,de-vega-2017,benatti-2017,li-2018,pollock-2018,milz-2019,li-2019}).
The notion of divisibility is based on the intermediate propagator
map $\Lambda(t_2,t_1)$ between time moments $t_1 \geq 0$ and $t_2
\geq t_1$ such that $\Phi(t_2) = \Lambda(t_2,t_1) \Phi(t_1)$.
Hereafter we assume that the quantum dynamical map $\Phi(t)$ is
invertible for any $t \geq 0$, i.e., $\lambda(t) \lambda_z(t) \neq
0$ for any finite $t$. Then
\begin{equation} \label{Lambda-intermediate}
\Lambda(t_2,t_1) = \Phi(t_2) \Phi^{-1}(t_1).
\end{equation}

\noindent $\Phi(t)$ is called positive divisible (P-divisible) if
$\Lambda(t_2,t_1)$ is positive for all $t_2 \geq t_1 \geq 0$,
i.e., $\Lambda(t_2,t_1)[X] \geq 0$ for all positive semidefinite
operators $X \ge 0$. Similarly, $\Phi(t)$ is called completely
positive divisible (CP-divisible) if $\Lambda(t_2,t_1)$ is
completely positive for all $t_2 \geq t_1 \geq 0$.

For differentiable quantum dynamical maps $\Phi(t)$ the condition
of CP-divisibility is readily reformulated in terms of the time
dependent generator $L(t) = \frac{d\Phi(t)}{dt} \Phi^{-1}(t)$. In
fact, for the phase covariant qubit dynamical
map~\eqref{Phi-phase-covariant-qubit} we have
\begin{equation} \label{L-generator}
L(t)[\varrho] = \gamma_+(t) \left( \sigma_+ \varrho \sigma_- -
\frac{1}{2} \{\varrho, \sigma_- \sigma_+ \} \right) + \gamma_-(t)
\left( \sigma_- \varrho \sigma_+ - \frac{1}{2} \{\varrho, \sigma_+
\sigma_- \} \right) + \gamma_z(t) \left( \sigma_z \varrho \sigma_z
- \varrho \right),
\end{equation}

\noindent where $\sigma_{\pm} = \frac{1}{2}( \sigma_x \pm i
\sigma_y )$ and the real valued decoherence rates
$\gamma_{\pm}(t)$ and $\gamma_z(t)$ are expressed through
$\lambda(t)$, $\lambda_z(t)$, and $t_z(t)$ as follows:
\begin{equation} \label{gammas-through-lambdas-and-tz}
\gamma_{\pm}(t) = \frac{\lambda_z(t)}{2} \frac{d}{dt} \left(
\frac{1 \pm t_z(t)}{\lambda_z(t)} \right), \quad \gamma_z(t) =
\frac{1}{4\lambda_z(t)} \frac{d\lambda_z(t)}{dt} -
\frac{1}{2\lambda(t)} \frac{d\lambda(t)}{dt} = \frac{1}{4}
\frac{d}{dt} \ln \frac{\lambda_z(t)}{\lambda^2(t)} .
\end{equation}

\noindent Given the time-local generator $L(t)$, the density
matrix evolution is given by the master equation
\begin{equation} \label{master-equation}
\frac{d\varrho(t)}{dt} = L(t)[\varrho(t)],
\end{equation}

\noindent which has a clear physical meaning: the first term in
the dissipator describes energy gain, the second term describes
energy dissipation, and the third term describes pure dephasing.
Therefore, the time-convolutionless master
equation~\eqref{master-equation} is a generalization of the
commonly used decoherence model for spin systems and
superconducting qubits that involves two characteristic times
$T_1$ and $T_2$~\cite{ithier-2005,chernyavskiy}.

Physically, the propagator $\Lambda(t+dt,t)$ for an infinitesimal
time interval $dt$ reads $e^{L(t) dt}$. Therefore,
$\Lambda(t+dt,t)$ is completely positive if and only if $L(t)$ is
a time-local version of the GKSL generator, i.e., $\gamma_{\pm}(t)
\geq 0$ and $\gamma_z(t) \geq 0$. On the other hand, if all
infinitesimal propagators are completely positive, then
$\Lambda(t_2,t_1)$ is completely positive for all $t_2 \geq t_1
\geq 0$. Hence, $\Phi(t)$ is CP-divisible if and only if
$\gamma_{\pm}(t) \geq 0$ and $\gamma_z(t) \geq 0$.

It was recently noticed in Ref.~\cite{haase-2019} that the
population $p(t) := \frac{1}{2}(1 + {\rm tr}[\varrho(t)\sigma_z])$
can be a non-monotonic function of time $t$ in a CP-divisible
phase covariant dynamics for some initial states $\varrho(0)$. In
this paper, we revisit this issue and demonstrate a CP-divisible
phase covariant dynamics $\Phi(t)$ such that the population $p(t)$
is not monotonic for \textit{all} possible initial states
$\varrho(0)$.

The rates $\gamma_{\pm}(t)$ and $\gamma_z(t)$ can temporarily get
negative values without violating complete positivity of
$\Phi(t)$. This corresponds to a CP-indivisible dynamics, which is
one of possible approaches to define
non-Markovianity~\cite{wolf-prl-2008,rivas-2010,hall-2014}. In
fact, an inverse relation to~\eqref{gammas-through-lambdas-and-tz}
is
\begin{eqnarray}
&& \lambda(t) = \exp\left[-\frac{1}{2} \Gamma_+(t) -\frac{1}{2}
\Gamma_-(t) - 2 \Gamma_z(t) \right], \quad \lambda_z(t) =
\exp\left[-\Gamma_+(t) - \Gamma_-(t) \right], \label{lambdax-trough-gammas}\\
&& t_z(t) = \exp\left[-\Gamma_+(t) - \Gamma_-(t) \right] \int_0^t
\left[ \gamma_+(t') - \gamma_-(t') \right] \exp[ \Gamma_+(t') +
\Gamma_-(t')] dt', \label{tz-trough-gammas}
\end{eqnarray}

\noindent where $\Gamma_{\pm}(t) = \int_0^t \gamma_{\pm}(t') dt'$
and $\Gamma_z(t) = \int_0^t \gamma_z (t') dt'$. The only
restriction on the rates $\gamma_{\pm}(t)$ and $\gamma_z(t)$ is
that $\lambda(t)$, $\lambda_z(t)$, and $t_z(t)$ must satisfy
inequalities~\eqref{CP}, so the rates $\gamma_{\pm}(t)$ and
$\gamma_z(t)$ can temporarily become negative. Moreover, the rate
$\gamma_z(t)$ can be negative for all $t > 0$, which corresponds
to an eternal CP-indivisible
dynamics~\cite{hall-2014,megier-2016,fpmz-2017}. Previously, the
examples of eternal CP-indivisible dynamics were constructed only
in the case of unital qubit dynamical maps that are
commutative~\cite{hall-2014,megier-2016,fpmz-2017}. In this paper,
we construct an extended class of eternal CP-indivisible dynamics
for non-commutative phase covariant qubit maps.

As the positivity of a linear map is much more difficult to
characterize as compared to complete
positivity~\cite{bengtsson-2006}, it is not surprising that the
necessary and sufficient conditions for P-divisibility of the
phase covariant qubit dynamics have remained unknown. In this
paper, we fill this gap and find a criterion of P-divisibility in
terms of decoherence rates $\gamma_{\pm}(t)$ and $\gamma_z(t)$. A
key tool in this study is the quantum Sinkhorn
theorem~\cite{gurvits-2004,aubrun-2015,aubrun-2017,fm-2017,ffk-2017,filippov-romp-2018,filippov-jms-2019}
that allows to characterize the region of parameters $\lambda$,
$\lambda_z$, and $t_z$ such that Eq.~\eqref{Phi-channel} defines a
positive map. P-divisibility of qubit dynamical maps implies many
interesting properties, for instance, monotonically decreasing
distinguishability of quantum states~\cite{blp}, monotonically
decreasing volume of accessible states~\cite{lorenzo-2013}, and
monotonically decreasing relative
entropy~\cite{muller-hermes-2017}.

The dynamics of a $d$-dimensional open quantum system can be
alternatively described by means of the Nakajima-Zwanzig
projective
techniques~\cite{breuer-book,nakajima-1958,zwanzig-1960,chruscinski-2019}
leading to an integro-differential master equation of the form
\begin{equation} \label{NZ}
\frac{d \varrho(t)}{dt} = \int_0^t K(t') [\varrho(t - t')] dt'
\end{equation}

\noindent with the memory kernel $K(t') : {\cal B}({\cal H}_d) \to
{\cal B}({\cal H}_d)$. In terms of the dynamical maps,
Eq.~\eqref{NZ} reads $\frac{d\Phi(t)}{dt} = \int_0^t K(t') \Phi(t
- t') dt'$. It is an open question what memory kernels $K(t')$
define a legitimate quantum dynamics $\Phi(t)$ in general. The
admissible memory kernels are characterized for Pauli dynamical
maps~\cite{chruscinski-2015} and quantum semi-Markov
processes~\cite{chruscinski-2016,chruscinski-2017,chruscinski-2-2017}.
In this paper, we give sufficient conditions for admissible memory
kernels corresponding to legitimate phase covariant qubit
dynamics. It facilitates the analysis of divisibility of such
dynamical maps by subjecting them to time
deformations~\cite{fc-2018}.

The paper is organized as follows. In
Section~\ref{section-attainable}, we characterize
channels~\eqref{Phi-channel} that are attainable in the phase
covariant CP divisible dynamics and semigroup dynamics. In
Section~\ref{section-non-monotonicity}, we revisit the result of
Ref.~\cite{haase-2019} and construct a phase covariant process
with non-negative rates $\gamma_{\pm}(t)$ and $\gamma_z(t) \geq 0$
such that the population $p(t)$ is not monotonic for all initial
states. In Section~\ref{section-eternal-indivisible}, we construct
two families of non-unital eternal CP-indivisible phase covariant
processes, one of which is commutative and the other is not. In
Section~\ref{section-P-divisibility}, we use the quantum Sinkhorn
theorem and find parameters $\lambda,\lambda_z,t_z$ for which
Eq.~\eqref{Phi-channel} defines a positive map. We further use
this result and characterize positive divisible dynamical
maps~\eqref{Phi-phase-covariant-qubit}. In
Section~\ref{section-memory-kernel}, we find a class of admissible
memory kernels $K(t)$ leading to legitimate phase covariant qubit
dynamics. In Section~\ref{section-conclusions}, brief conclusions
are given.

\section{Channels attainable in semigroup dynamics and CP-divisible
dynamics}\label{section-attainable}

Following the classification of
Refs.~\cite{wolf-prl-2008,davalos,puchala}, consider a class of
quantum channels $\mathsf{C}^L$ attainable in at least one phase
covariant semigroup dynamics, i.e.,
\begin{equation*}
\mathsf{C}^L = \overline{ \left\{ e^{Lt} \ \vert \ t \geq 0
\text{~and~} L \text{~has the form~} \eqref{L-generator}
\text{~with constant rates~} \gamma_{\pm}, \gamma_z \geq 0
\right\} },
\end{equation*}

\noindent where the overline denotes the closure in operator norm
for Choi matrices.

\begin{proposition} \label{prop-C-L}
$\mathsf{C}^L$ consists of phase covariant qubit
channels~\eqref{Phi-channel} with $|\lambda_z| + |t_z| \leq 1$, $4
\lambda^2 + t_z^2 \leq (1 + \lambda_z)^2$, $\lambda \geq 0$, and
$\lambda_z \geq \lambda^2$.
\end{proposition}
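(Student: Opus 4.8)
The plan is to characterize $\mathsf{C}^L$ by first solving the semigroup equations of motion explicitly, then taking the closure.

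**Step 1: Solve for the semigroup trajectory.** For constant rates $\gamma_\pm,\gamma_z \ge 0$, equations~\eqref{lambdax-trough-gammas}--\eqref{tz-trough-gammas} become elementary: writing $\Gamma_{\pm}=\gamma_{\pm}t$, $\Gamma_z=\gamma_z t$, we get $\lambda(t)=\exp[-(\tfrac{1}{2}\gamma_++\tfrac{1}{2}\gamma_-+2\gamma_z)t]$, $\lambda_z(t)=\exp[-(\gamma_++\gamma_-)t]$, and $t_z(t)=\dfrac{\gamma_+-\gamma_-}{\gamma_++\gamma_-}\big(1-\lambda_z(t)\big)$ when $\gamma_++\gamma_-\ne 0$ (and $t_z(t)=(\gamma_+-\gamma_-)t$, which forces $t_z\equiv 0$ in the nontrivial regime, when $\gamma_++\gamma_-=0$). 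Immediately $\lambda(t)\ge 0$ for all $t$, and since $\tfrac{1}{2}\gamma_++\tfrac{1}{2}\gamma_- \le \gamma_++\gamma_-$ we have $\lambda(t)^2 = \exp[-(\gamma_++\gamma_-+4\gamma_z)t] \le \exp[-(\gamma_++\gamma_-)t] = \lambda_z(t)$, giving the two characteristic inequalities $\lambda\ge 0$ and $\lambda_z\ge\lambda^2$ along every semigroup trajectory. The conditions~\eqref{CP} hold because $e^{Lt}$ is a legitimate channel. This shows the inclusion $\mathsf{C}^L \subseteq \{\text{stated region}\}$ before closure; since the stated region is topologically closed, it survives the closure.

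**Step 2: Reverse inclusion — reachability.** Conversely, given target parameters $(\lambda,\lambda_z,t_z)$ in the stated region, I must exhibit a semigroup (or a limit of semigroups) reaching it. Parametrize: set $\gamma_++\gamma_- =: \mu$ and $4\gamma_z =: \nu$, both $\ge 0$, and $a := (\gamma_+-\gamma_-)/\mu \in [-1,1]$. Then at the stopping time $t$ one reads $\lambda_z = e^{-\mu t}$, $\lambda = e^{-(\mu+\nu)t/2} = \lambda_z^{1/2} e^{-\nu t/2}$, and $t_z = a(1-\lambda_z)$. So I need: (i) choose $\lambda_z\in(0,1]$ — done, it's prescribed, set $\mu t = -\ln\lambda_z$; (ii) choose $\nu t$ so that $\lambda = \lambda_z^{1/2}e^{-\nu t/2}$, i.e. $\nu t = \ln(\lambda_z/\lambda^2) \ge 0$, which is $\ge 0$ precisely by the hypothesis $\lambda_z\ge\lambda^2$ — and $\ge 0$ is exactly what $\gamma_z\ge 0$ demands; (iii) choose $a=t_z/(1-\lambda_z)$ when $\lambda_z<1$, which lies in $[-1,1]$: this requires $|t_z| \le 1-\lambda_z$, and I must check this follows from~\eqref{CP}. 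The boundary and degenerate cases ($\lambda_z=1$, forcing $t=0$ hence $\lambda=1,t_z=0$; or $\lambda=0$; or $\lambda_z=0$) are handled as limits — e.g. $\lambda_z\to 0$ means $\mu t\to\infty$ — which is why the closure is needed in the definition of $\mathsf{C}^L$.

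**The main obstacle** I expect is Step 2(iii): verifying that every point in the claimed region with $\lambda\ge 0$, $\lambda_z\ge\lambda^2$ actually satisfies $|t_z|\le 1-\lambda_z$, so that $a\in[-1,1]$ is available. The issue is that~\eqref{CP} gives $|\lambda_z|+|t_z|\le 1$ and $4\lambda^2+t_z^2\le(1+\lambda_z)^2$; the first of these reads $|t_z|\le 1-\lambda_z$ exactly when $\lambda_z\ge 0$, but in principle a channel in $\mathsf{C}^L$ might want $\lambda_z<0$ — except that the constraint $\lambda_z\ge\lambda^2\ge 0$ already rules that out. So in fact the region we are describing automatically has $\lambda_z\ge 0$, and then $|\lambda_z|+|t_z|\le 1$ is literally $|t_z|\le 1-\lambda_z$, closing the gap. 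The remaining care is bookkeeping: confirming that each limiting/boundary configuration is genuinely a norm-limit of Choi matrices of interior semigroup channels, and confirming there is no further constraint hiding in the second inequality of~\eqref{CP} that the semigroup parametrization cannot meet (it cannot, since $e^{Lt}$ is CP by construction, so the semigroup image lies inside~\eqref{CP} automatically and the only genuine cutting conditions are $\lambda\ge 0$ and $\lambda_z\ge\lambda^2$).
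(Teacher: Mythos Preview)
Your proposal is correct and follows essentially the same approach as the paper: compute the semigroup trajectory explicitly from~\eqref{lambdax-trough-gammas}--\eqref{tz-trough-gammas} to get the forward inclusion, then invert the parametrization (the paper takes $t=1$ and solves $\gamma_\pm = \frac{(1-\lambda_z\pm t_z)(-\ln\lambda_z)}{2(1-\lambda_z)}$, $\gamma_z=\frac{1}{4}\ln\frac{\lambda_z}{\lambda^2}$) for interior points, and appeal to the closure for the boundary. Your discussion of why $|t_z|\le 1-\lambda_z$ follows from $\lambda_z\ge\lambda^2\ge 0$ and~\eqref{CP} makes explicit a step the paper leaves implicit when asserting $\gamma_\pm\ge 0$.
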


\begin{proof}
The relations~\eqref{lambdax-trough-gammas} and
\eqref{tz-trough-gammas} imply that in the semigroup dynamics
\begin{equation*}
\lambda(t) = e^{-\frac{1}{2} (\gamma_+ + \gamma_- + 4 \gamma_z)
t}, \quad \lambda_z(t) = e^{-(\gamma_+ + \gamma_-) t }, \quad
t_z(t) = \frac{\gamma_+ - \gamma_-}{\gamma_+ + \gamma_-} \left[ 1
- e^{-(\gamma_+ + \gamma_-)t}\right]
\end{equation*}

\noindent and necessarily satisfy the claimed conditions. To see
the other direction, suppose $|\lambda_z| + |t_z| \leq 1$, $4
\lambda^2 + t_z^2 \leq (1 + \lambda_z)^2$, $\lambda > 0$, and $1
> \lambda_z > \lambda^2$, so that the channel $\Phi$ with parameters
$\lambda,\lambda_z,t_z$ can be expressed as $\Phi = e^{L}$, where
\begin{equation*}
\gamma_{\pm} = \frac{(1-\lambda_z \pm t_z)(- \ln
\lambda_z)}{2(1-\lambda_z)} \geq 0, \quad \gamma_z = \frac{1}{4}
\ln \frac{\lambda_z}{\lambda^2} \geq 0.
\end{equation*}

\noindent The non-strict inequalities for $\lambda$, $\lambda_z$,
and $t_z$ then follow from the closure procedure.
\end{proof}

\begin{remark} \label{remark-1}
Restoring rotations around $z$-axis of the Bloch ball ($\varrho
\rightarrow e^{-i \sigma_z \omega t} \varrho e^{i \sigma_z \omega
t}$, $\omega \in \mathbb{R}$) into a general phase covariant
dynamics, we get an extra term $-i \omega [\sigma_z, \varrho]$ in
$L[\varrho]$. Whenever $\sin \omega t = \pm 1$, this results in
the change $\lambda(t) \rightarrow - \lambda(t)$ in
Eq.~\eqref{Phi-phase-covariant-qubit}. Hence, the region of
parameters $\lambda$, $\lambda_z$, $t_z$ attainable in a general
phase covariant semigroup dynamics is twice larger, namely,
$|\lambda_z| + |t_z| \leq 1$, $4 \lambda^2 + t_z^2 \leq (1 +
\lambda_z)^2$, and $\lambda_z \geq \lambda^2$.
\end{remark}

\begin{figure}
\includegraphics[width=8cm]{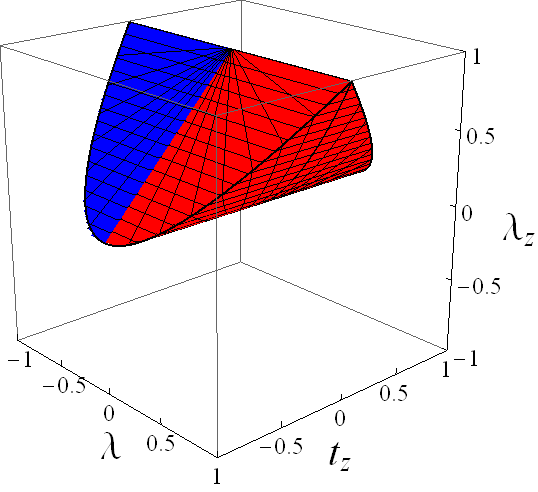}
\caption{\label{figure-3} Region of parameters
$(\lambda,\lambda_z,t_z)$ for which the
channel~\eqref{Phi-channel} can be attained as a result of
entirely dissipative semigroup dynamics (red) and its extension
due to the coherent part $-i \omega [\sigma_z, \cdot ]$ in the
generator $L$ (blue).}
\end{figure}

The class $\mathsf{C}^L$ is illustrated in Fig.~\ref{figure-3}.
Every quantum channel within the class $\mathsf{C}^L$ is a convex
mixture of a dephasing channel in the basis $\{\ket{0},\ket{1}\}$,
an amplitude damping channel with a stationary point
$\ket{0}\bra{0}$, and an amplitude damping channel with a
stationary point $\ket{1}\bra{1}$. Generalized amplitude damping
semigroups $\Phi(t) = \exp\left[ \gamma_+ t \left( \sigma_+
\varrho \sigma_- - \frac{1}{2} \{\varrho, \sigma_- \sigma_+ \}
\right) + \gamma_- t \left( \sigma_- \varrho \sigma_+ -
\frac{1}{2} \{\varrho, \sigma_+ \sigma_- \} \right) \right] $ are
ultimately CP-divisible, which means that an infinitesimal
perturbation in the trajectory $(\lambda(t),\lambda_z(t),t_z(t))$
of such a dynamical map may break the CP-divisibility
property~\cite{fpmz-2017}.

Extending the classification of
Refs.~\cite{wolf-prl-2008,davalos}, consider a class of quantum
channels $\mathsf{C}_{\rm ph~cov}^{\rm CP}$ attainable in a
CP-divisible phase covariant dynamics, i.e.,
\begin{equation*}
\mathsf{C}_{\rm ph~cov}^{\rm CP} = \overline{ \left\{
T_{\leftarrow} \exp\left( \int_0^{t} L(t') dt' \right) \ \vert \ t
\geq 0 \text{~and~} L(t) \text{~has the form~} \eqref{L-generator}
\text{~with~} \gamma_{\pm}(t), \gamma_z(t) \geq 0 \right\} },
\end{equation*}

\noindent where $T_{\leftarrow}$ is the Dyson time ordering
operator.

\begin{proposition} \label{prop-C-CP}
$\mathsf{C}_{\rm ph~cov}^{\rm CP} = \mathsf{C}^L$.
\end{proposition}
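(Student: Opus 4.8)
The plan is to prove the two inclusions separately. The inclusion $\mathsf{C}^L \subseteq \mathsf{C}_{\rm ph~cov}^{\rm CP}$ is immediate: a semigroup generated by an $L$ of the form~\eqref{L-generator} with constant rates $\gamma_\pm,\gamma_z \geq 0$ is in particular a CP-divisible phase covariant dynamics, because for a time-independent generator the Dyson-ordered exponential $T_{\leftarrow}\exp(\int_0^{t}L\,dt')$ reduces to $e^{Lt}$. Hence the generating set of $\mathsf{C}^L$ lies inside the generating set of $\mathsf{C}_{\rm ph~cov}^{\rm CP}$, and since both sets are closures, $\mathsf{C}^L \subseteq \mathsf{C}_{\rm ph~cov}^{\rm CP}$. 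The substance of the proposition is the reverse inclusion $\mathsf{C}_{\rm ph~cov}^{\rm CP} \subseteq \mathsf{C}^L$.

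First I would fix a channel $\Phi$ in the generating set of $\mathsf{C}_{\rm ph~cov}^{\rm CP}$, i.e.\ $\Phi = \Phi(t_\star)$ for some $t_\star \geq 0$ and some time-local generator of the form~\eqref{L-generator} with $\gamma_\pm(t),\gamma_z(t) \geq 0$ on $[0,t_\star]$. Since $\Phi$ is a completely positive phase covariant qubit channel, its parameters $(\lambda,\lambda_z,t_z) = (\lambda(t_\star),\lambda_z(t_\star),t_z(t_\star))$ already obey $|\lambda_z| + |t_z| \leq 1$ and $4\lambda^2 + t_z^2 \leq (1+\lambda_z)^2$ by~\eqref{CP}. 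It remains to verify $\lambda \geq 0$ and $\lambda_z \geq \lambda^2$, and here I would invoke the integrated relation~\eqref{lambdax-trough-gammas}. Because the rates are real, $\lambda(t_\star) = \exp\!\big[-\tfrac12\Gamma_+(t_\star) - \tfrac12\Gamma_-(t_\star) - 2\Gamma_z(t_\star)\big] > 0$, and moreover $\lambda_z(t_\star)/\lambda^2(t_\star) = \exp[4\Gamma_z(t_\star)]$. Since $\gamma_z(t)\geq 0$ along the whole trajectory, $\Gamma_z(t_\star) = \int_0^{t_\star}\gamma_z(t')\,dt' \geq 0$, whence $\lambda_z(t_\star) \geq \lambda^2(t_\star)$. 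Thus $(\lambda,\lambda_z,t_z)$ satisfies all four conditions of Proposition~\ref{prop-C-L}, i.e.\ $\Phi \in \mathsf{C}^L$.

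Finally I would dispose of the closure: the four conditions of Proposition~\ref{prop-C-L} are non-strict inequalities, so they cut out a closed subset of parameter space; every element of the generating set of $\mathsf{C}_{\rm ph~cov}^{\rm CP}$ satisfies them, hence so does every element of its closure, and Proposition~\ref{prop-C-L} identifies this closed set with $\mathsf{C}^L$. Combining the two inclusions gives $\mathsf{C}_{\rm ph~cov}^{\rm CP} = \mathsf{C}^L$.

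I do not expect a genuine obstacle here; the only points requiring care are (i) confirming that the time-ordered exponential $T_{\leftarrow}\exp(\int_0^{t}L(t')dt')$ is again of the phase covariant form~\eqref{Phi-channel}, which follows because each $L(t')$ is a phase covariant generator and concatenations of phase covariant maps are phase covariant, and (ii) checking that $\lambda(t)$ remains strictly positive, so that~\eqref{lambdax-trough-gammas} applies with no sign ambiguity; this uses continuity, the initial condition $\lambda(0)=1$, and the standing invertibility assumption $\lambda(t)\neq 0$. Neither step needs more than the facts already recorded in the excerpt.
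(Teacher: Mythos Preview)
Your argument is correct and follows essentially the same route as the paper: both proofs hinge on the identity $\lambda_z(t)/\lambda^2(t) = \exp[4\Gamma_z(t)]$ implied by~\eqref{lambdax-trough-gammas}, so that nonnegativity of $\gamma_z$ forces $\lambda_z \geq \lambda^2$, which together with~\eqref{CP} and $\lambda>0$ places the channel in $\mathsf{C}^L$ via Proposition~\ref{prop-C-L}. The only stylistic difference is that the paper phrases the key step as a contrapositive (if $\lambda_z < \lambda^2$ then $\gamma_z$ must be negative somewhere), whereas you argue directly from $\Gamma_z(t_\star) \geq 0$; your handling of the closure is also more explicit than the paper's.
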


\begin{proof}
Obviously, $\mathsf{C}^L \subset \mathsf{C}_{\rm ph~cov}^{\rm CP}$
as a particular case of time-independent decoherence rates. Let us
prove that $\mathsf{C}_{\rm ph~cov}^{\rm CP} \subset
\mathsf{C}^L$. Note that the dynamics with the generator $L(t)$ of
the form~\eqref{L-generator} allows only non-negative values of
$\lambda(t)$, see Eq.~\eqref{lambdax-trough-gammas}. Consider a
valid quantum channel $\Phi'$ of the form~\eqref{Phi-channel} with
parameter $\lambda \geq 0$. As $\Phi'$ is completely positive, the
conditions $|\lambda_z| + |t_z| \leq 1$ and $4 \lambda^2 + t_z^2
\leq (1 + \lambda_z)^2$ are automatically satisfied. By
Proposition~\ref{prop-C-L}, if $\Phi' \not \in \mathsf{C}^L$, then
$\lambda_z < \lambda^2$. Suppose $\Phi' \not \in \mathsf{C}^L$ and
$\Phi'$ is attained by a phase covariant
dynamics~\eqref{Phi-phase-covariant-qubit} with some time
dependent rates $\gamma_{\pm}(t)$ and $\gamma_z(t)$, i.e., $\Phi'
= \Phi(t_0)$ for some $t_0 \geq 0$. As $\lambda_z < \lambda^2$,
Eq.~\eqref{gammas-through-lambdas-and-tz} implies that
$\gamma_z(t_0) < 0$, i.e., $\Phi(t)$ is not CP divisible.
Therefore, a CP divisible phase covariant dynamics with the
time-local generator~\eqref{L-generator} results in the dynamical
maps $\Phi(t)$ such that $\Phi(t_0) \in \mathsf{C}^L$ for any
fixed $t_0 \geq 0$, i.e., $\mathsf{C}_{\rm ph~cov}^{\rm CP}
\subset \mathsf{C}^L$.
\end{proof}

\begin{remark} \label{remark-2}
As in Remark~\ref{remark-1}, restoring rotations around $z$-axis
of the Bloch ball into a general phase covariant dynamics, we get
the twice larger region of parameters $\lambda$, $\lambda_z$,
$t_z$ for channels $\Phi$ attainable by CP-divisible phase
covariant dynamics, namely, $|\lambda_z| + |t_z| \leq 1$, $4
\lambda^2 + t_z^2 \leq (1 + \lambda_z)^2$, and $\lambda_z \geq
\lambda^2$.
\end{remark}

\begin{remark} \label{remark-3}
If a phase covariant qubit channel $\Phi \neq {\rm Id}$ is
obtained as a result of a semigroup dynamics, i.e., $\Phi = e^{L
t_0}$ for some $t_0 > 0$, then any channel from the semigroup
$e^{Lt}$, $t \geq 0$, is phase covariant. However, if a phase
covariant qubit channel $\Phi \neq {\rm Id}$ is obtained as a
result of a general qubit CP-divisible dynamics $\Theta(t)$, i.e.,
$\Phi = \Theta(t_0)$ for some $t_0 > 0$, then $\Theta(t)$ does not
have to be phase covariant for all $t \geq 0$. In fact, consider a
phase covariant qubit channel $\Phi'[\varrho] = \frac{1}{2} \left(
{\rm tr}[\varrho] I - {\rm tr}[\sigma_z \varrho] \sigma_z \right)$
such that $\Phi' \not\in \mathsf{C}^L = \mathsf{C}_{\rm
ph~cov}^{\rm CP}$. Note that $\Phi'[\varrho] = \sigma_x
\Phi[\varrho] \sigma_x$, where $\Phi[\varrho] = \frac{1}{2} \left(
{\rm tr}[\varrho] I + {\rm tr}[\sigma_z \varrho] \sigma_z
\right)$, i.e., $\Phi \in \mathsf{C}^L = \mathsf{C}_{\rm
ph~cov}^{\rm CP}$. Therefore, the channel $\Phi'$ can be obtained
as a result of a CP-divisible dynamics, however, the intermediate
transformation $\varrho \to \sigma_x \varrho \sigma_x$ is not
phase covariant. This example shows that $\Phi' \not \in
\mathsf{C}_{\rm ph~cov}^{\rm CP}$ but $\Phi' \in \mathsf{C}^{\rm
CP}$, where
\begin{equation*}
\mathsf{C}^{\rm CP} = \overline{ \left\{ T_{\leftarrow} \exp\left(
\int_0^{t} L(t') dt' \right) \ \vert \ t \geq 0 \text{~and~} L(t)
\text{~has a time-dependent form~} \eqref{gksl} \text{~with~}
\gamma_{k}(t) \geq 0 \right\} }.
\end{equation*}

\noindent By using the results of
Refs.~\cite{wolf-prl-2008,davalos} and the explicit form of the
quantum Sinkhorn theorem for phase covariant qubit channels
(Proposition~\ref{prop-P}) as well as taking into account possible
unitary rotations of the Bloch ball, we conclude that a
channel~\eqref{Phi-channel} with parameters
$\lambda,\lambda_z,t_z$ belongs to the class $\mathsf{C}^{\rm CP}$
if and only if $|\lambda_z| + |t_z| \leq 1$, $4 \lambda^2 + t_z^2
\leq (1 + \lambda_z)^2$, $\lambda_z \geq \lambda^2$ or
$|\lambda_z| + |t_z| \leq 1$, $4 \lambda^2 + t_z^2 \leq (1 +
\lambda_z)^2$, $\lambda = 0$.
\end{remark}

\section{Non-monotonicity of population in CP-divisible dynamics}
\label{section-non-monotonicity}

\begin{proposition} \label{prop-non-monotonicity}
There exists a CP-divisible phase covariant qubit dynamical map
$\Phi(t)$ such that the population $p(t)$ is non-monotonic for any
initial state $\varrho(0)$.
\end{proposition}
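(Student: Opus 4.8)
The plan is to write down the time evolution of the population $p(t)$ for a fixed initial Bloch vector $(r_x, r_y, r_z)$ and then design the rates $\gamma_\pm(t)$ so that $\dot p(t)$ changes sign at some time, uniformly in the initial state. From Eq.~\eqref{Phi-phase-covariant-qubit}, $\mathrm{tr}[\varrho(t)\sigma_z] = \lambda_z(t) r_z + t_z(t)$, so $p(t) = \frac12\big(1 + \lambda_z(t) r_z + t_z(t)\big)$ and hence $\dot p(t) = \frac12\big(\dot\lambda_z(t)\, r_z + \dot t_z(t)\big)$. Since $r_z \in [-1,1]$ is arbitrary, $\dot p(t)$ fails to be sign-definite at time $t$ for \emph{every} initial state precisely when $|\dot t_z(t)| < |\dot\lambda_z(t)|$. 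The strategy is therefore to exhibit a CP-divisible dynamics — i.e.\ one with $\gamma_\pm(t)\geq 0$ and $\gamma_z(t)\geq 0$ for all $t$ — for which the single scalar inequality $|\dot t_z(t^\ast)| < |\dot\lambda_z(t^\ast)|$ holds at some $t^\ast>0$, and moreover such that $\dot p$ genuinely changes sign on a neighbourhood so that non-monotonicity (not just a momentary critical point) occurs for all $\varrho(0)$.

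First I would choose the dephasing rate freely, say $\gamma_z(t)\geq 0$ arbitrary (it does not enter $\lambda_z$ or $t_z$), and concentrate on $\gamma_\pm(t)$. Using the semigroup-type formulas from the proof of Proposition~\ref{prop-C-L}, or more directly Eqs.~\eqref{lambdax-trough-gammas}--\eqref{tz-trough-gammas}, one has $\lambda_z(t) = e^{-\Gamma_+(t)-\Gamma_-(t)}$, so $\dot\lambda_z(t) = -(\gamma_+(t)+\gamma_-(t))\lambda_z(t)$, while differentiating Eq.~\eqref{tz-trough-gammas} gives $\dot t_z(t) = (\gamma_+(t)-\gamma_-(t)) - (\gamma_+(t)+\gamma_-(t)) t_z(t)$. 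The condition $|\dot t_z| < |\dot\lambda_z|$ becomes $|(\gamma_+-\gamma_-) - (\gamma_++\gamma_-)t_z| < (\gamma_++\gamma_-)\lambda_z$. A convenient way to satisfy this is to switch the dynamics: on $[0,t_1]$ run a pure energy-gain process ($\gamma_-=0$, $\gamma_+>0$), which drives $t_z(t)$ up toward a positive value while $\lambda_z$ decays; then on $[t_1, t^\ast]$ turn on a large $\gamma_-$ (with $\gamma_+$ comparable or smaller), so that $\gamma_+-\gamma_-$ becomes negative and $-(\gamma_++\gamma_-)t_z$ is also negative — the two contributions to $\dot t_z$ then partially cancel if one tunes the magnitudes, making $|\dot t_z|$ small, while $|\dot\lambda_z| = (\gamma_++\gamma_-)\lambda_z$ can be kept sizeable. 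I would write an explicit piecewise-constant (or smooth interpolation of piecewise-constant) choice of $\gamma_\pm$ and verify the inequality at the relevant time, along with checking that the CP conditions~\eqref{CP} hold throughout (they do automatically, since any non-negative rates yield a legitimate dynamics).

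The main obstacle is not the sign-change inequality itself but ensuring it holds \emph{strictly on an interval} so that $p(t)$ is actually non-monotonic — i.e.\ $\dot p$ takes both signs near $t^\ast$ — for \emph{all} $r_z\in[-1,1]$ simultaneously, including $r_z=0$ where $\dot p(t)=\tfrac12\dot t_z(t)$: one must arrange that $\dot t_z$ alone also changes sign, so that the $r_z=0$ trajectory is non-monotonic too. This forces $\dot t_z(t)$ to vanish and change sign somewhere, while $\dot\lambda_z$ stays bounded away from zero there, and the transition must be engineered so that at the crossing $|\dot\lambda_z|$ strictly dominates. I expect a clean realization is a two-stage protocol where stage one inflates $t_z$ to some $t_z^\ast>0$ using $\gamma_+$ only, and stage two uses rates with $\gamma_+-\gamma_- = (\gamma_++\gamma_-)t_z$ holding momentarily (so $\dot t_z=0$) but with $\gamma_+ - \gamma_-$ crossing through that value so $\dot t_z$ changes sign; then $|\dot\lambda_z| = (\gamma_++\gamma_-)\lambda_z > 0 = |\dot t_z|$ at the crossing, and by continuity the strict inequality $|\dot t_z|<|\dot\lambda_z|$ persists on a neighbourhood, which yields non-monotonicity of $p$ for every initial state. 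Writing out one such explicit family of rates and substituting into Eqs.~\eqref{lambdax-trough-gammas}--\eqref{tz-trough-gammas} completes the construction.
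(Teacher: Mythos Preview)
Your central inequality points in the wrong direction. The condition $|\dot t_z(t)|<|\dot\lambda_z(t)|$ at a given $t$ says only that the map $r_z\mapsto\dot p(t)$ takes both signs; it says nothing about whether, for a \emph{fixed} $r_z$, the function $t\mapsto\dot p(t)$ changes sign, which is what non-monotonicity requires. In fact the opposite is needed: since $\dot\lambda_z\leq 0$ always in a CP-divisible phase covariant dynamics, on any interval where $|\dot t_z|<|\dot\lambda_z|$ the choice $r_z=1$ gives $\dot p=\tfrac12(\dot\lambda_z+\dot t_z)<0$ throughout, and $r_z=-1$ gives $\dot p>0$ throughout. Your two-stage protocol makes this concrete: in stage one with $\gamma_-=0$ the state $r_z=1$ is stationary ($\lambda_z+t_z\equiv 1$, so $p\equiv1$), and in stage two your neighbourhood condition forces $\dot p<0$ for $r_z=1$; the resulting population is constant and then decreasing --- monotonic. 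The correct necessary and sufficient condition is that there exist times $t_a,t_b$ with $\dot t_z(t_a)>|\dot\lambda_z(t_a)|$ and $\dot t_z(t_b)<-|\dot\lambda_z(t_b)|$; then $\dot p(t_a)>0$ and $\dot p(t_b)<0$ simultaneously for every $r_z\in[-1,1]$.

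The paper achieves this with a single explicit choice rather than a staged protocol: take $\lambda_z(t)=e^{-2\nu t}$, $\lambda(t)=e^{-\nu t}$, and $t_z(t)=\tfrac{2\nu}{\sqrt{4\nu^2+\omega^2}}\sin\omega t$, so that
\[
p(t)=\tfrac12\Big(1+\tfrac{2\nu}{\sqrt{4\nu^2+\omega^2}}\sin\omega t+e^{-2\nu t}\,{\rm tr}[\varrho(0)\sigma_z]\Big).
\]
Once $e^{-2\nu t}$ falls below the amplitude of the sinusoid, $p$ oscillates for every initial state. Substituting into Eq.~\eqref{gammas-through-lambdas-and-tz} gives $\gamma_\pm(t)=\nu\pm\tfrac{\nu}{\sqrt{4\nu^2+\omega^2}}(2\nu\sin\omega t+\omega\cos\omega t)\geq 0$ and $\gamma_z(t)=0$, so CP-divisibility is immediate. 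The moral is that the oscillation of $t_z$ must eventually \emph{dominate} $\dot\lambda_z$, not be dominated by it.
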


\begin{proof}
Let $\lambda(t) = e^{-\nu t}$, $\lambda_z(t) = e^{- 2 \nu t}$, and
$t_z(t) = \frac{2\nu}{\sqrt{4\nu^2 + \omega^2}} \sin \omega t$,
where $\nu, \omega > 0$. Then the population reads
\begin{equation*} \label{population}
p(t) = \frac{1}{2} \Big[ 1 + t_z(t) + \lambda_z(t) {\rm
tr}[\varrho(0)\sigma_z] \Big] = \frac{1}{2} \left[ 1 +
\frac{2\nu}{\sqrt{4\nu^2 + \omega^2}} \sin \omega t + e^{- 2 \nu
t} {\rm tr}[\varrho(0)\sigma_z] \right]
\end{equation*}

\noindent and clearly has a non-monotonic behaviour for all
initial density operators $\varrho(0)$ if $t > \frac{1}{2\nu} \ln
\frac{\sqrt{4\nu^2 + \omega^2}}{2\nu}$.

To guarantee that $\Phi(t)$ is a valid CP-divisible dynamical map
it suffices to check that the rates $\gamma_{\pm}(t)$ and
$\gamma_z(t)$ are non-negative for any $t \geq 0$. Substituting
our particular choice for $\lambda(t)$, $\lambda_z(t)$, and
$t_z(t)$ into Eq.~\eqref{gammas-through-lambdas-and-tz}, we get
\begin{equation*}
\gamma_{\pm}(t) = \nu \pm \frac{\nu}{\sqrt{4\nu^2 + \omega^2}}
\left( 2 \nu \sin \omega t + \omega \cos \omega t \right) \geq 0,
\quad \gamma_z(t) = 0.
\end{equation*}

\noindent Therefore $\Phi(t)$ is a valid quantum dynamical map
enjoying the CP-divisibility property.
\end{proof}

The construction used in the proof of
Proposition~\ref{prop-non-monotonicity} has a clear physical
meaning too. As $\gamma_z(t) = 0$ for all time moments $t \geq 0$,
the master equation~\eqref{master-equation} with the
generator~\eqref{L-generator} defines a generalized amplitude
damping dynamics~\cite{nielsen}, where the time-local stationary
state changes in time. Note that the population
oscillations~\eqref{population} do not decay in time. The
peak-to-peak amplitude $\max_{t \geq t_0} p(t) - \min_{t \geq t_0}
p(t) \geq \frac{2\nu}{\sqrt{4\nu^2 + \omega^2}}$  for all $t_0
\geq 0$ and can be arbitrarily close to 1 if $\omega \ll \nu$.

\section{Eternal CP-indivisible dynamics} \label{section-eternal-indivisible}

In this section, we construct a one-parameter family of phase
covariant qubit dynamical maps $\{\Phi_a(t)\}_{|a| < 1}$ such that
the intermediate map $\Lambda(t_2,t_1)$ is not completely positive
for any $t_2 > t_1 > 0$. Since $\Phi_a(t)$ is non-unital if $a
\neq 0$, our construction provides a family of non-unital eternal
CP-indivisible dynamical maps.

\begin{proposition} \label{prop-eternal-commutative}
A phase covariant qubit dynamical map $\{\Phi_a(t)\}$ of the
form~\eqref{Phi-phase-covariant-qubit} with
\begin{equation*}
\lambda(t) = \frac{1}{2} \sqrt{(1 + e^{-2\nu t})^2 - a^2 (1 -
e^{-2\nu t})^2}, \quad \lambda_z(t) = e^{-2\nu t}, \quad t_z(t) =
a (1 - e^{-2\nu t}), \quad \nu>0,
\end{equation*}

\noindent is eternal CP indivisible for all real $a$ satisfying
$|a| < 1$.
\end{proposition}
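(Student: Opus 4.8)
The plan is to verify that $\Phi_a(t)$ is a legitimate invertible dynamical map and then to compute the intermediate propagator $\Lambda(t_2,t_1)=\Phi_a(t_2)\Phi_a^{-1}(t_1)$ explicitly and show that its parameters violate the complete-positivity conditions~\eqref{CP} for every $t_2>t_1>0$.

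First, I would substitute the given $\lambda(t),\lambda_z(t),t_z(t)$ into~\eqref{CP}. Writing $x:=e^{-2\nu t}\in(0,1]$, one has $|\lambda_z(t)|+|t_z(t)|=x+|a|(1-x)=|a|+x(1-|a|)\le 1$ since $|a|<1$ and $x\le 1$, while the second inequality holds as an identity: $4\lambda^2(t)+t_z^2(t)=\big[(1+x)^2-a^2(1-x)^2\big]+a^2(1-x)^2=(1+x)^2=(1+\lambda_z(t))^2$. Hence $\Phi_a(t)$ is completely positive and trace preserving; it even sits on the boundary surface of Fig.~\ref{figure-2} at all times. Moreover $\Phi_a(0)={\rm Id}$ (as $\lambda(0)=\lambda_z(0)=1$, $t_z(0)=0$), and $(1+x)^2>a^2(1-x)^2$ gives $\lambda(t)>0$, so $\lambda(t)\lambda_z(t)\neq 0$ and $\Phi_a(t)$ is invertible, making $\Lambda(t_2,t_1)$ well defined. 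It is instructive, though not logically needed, to record that Eq.~\eqref{gammas-through-lambdas-and-tz} yields the constant rates $\gamma_\pm(t)=\nu(1\pm a)>0$ together with $\gamma_z(t)=-\,\frac{\nu(1-a^2)(1-e^{-4\nu t})}{2[(1+e^{-2\nu t})^2-a^2(1-e^{-2\nu t})^2]}<0$ for all $t>0$, so the eternal negativity is carried entirely by the dephasing rate.

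Next, I would use that $\Lambda(t_2,t_1)$ is again phase covariant and read off its Bloch parameters from the composition rule $r_x\mapsto\lambda r_x$, $r_z\mapsto\lambda_z r_z+t_z$. With $x_i:=e^{-2\nu t_i}$ and $E_i:=(1+x_i)^2-a^2(1-x_i)^2=4\lambda^2(t_i)$ this gives $\tilde\lambda^2=\lambda^2(t_2)/\lambda^2(t_1)=E_2/E_1$, $\tilde\lambda_z=\lambda_z(t_2)/\lambda_z(t_1)=x_2/x_1$, and $\tilde t_z=t_z(t_2)-\tilde\lambda_z t_z(t_1)=a(x_1-x_2)/x_1$. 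The first condition in~\eqref{CP} holds for these parameters (since $|a|<1$ and $x_2<x_1$), so $\Lambda(t_2,t_1)$ fails to be completely positive exactly when the second condition is violated, i.e.\ when $4\tilde\lambda^2+\tilde t_z^2>(1+\tilde\lambda_z)^2$; after multiplying by the positive quantity $x_1^2E_1$, this reduces to the scalar inequality
\begin{equation*}
4x_1^2E_2-W\,E_1>0,\qquad W:=(x_1+x_2)^2-a^2(x_1-x_2)^2 .
\end{equation*}

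The remaining, and most delicate, step is to establish this inequality for all $0<x_2<x_1<1$ and $|a|<1$. I would expand $4x_1^2E_2-WE_1$ as a polynomial in $x_1,x_2$ and, grouping the $a$-independent part separately from the coefficient of $a^2$, obtain the factorization
\begin{equation*}
4x_1^2E_2-W\,E_1=(1-a^2)(1-x_1)(x_1-x_2)\Big[\,2(1+x_1)(x_1+x_2)+(1+a^2)(1-x_1)(x_1-x_2)\,\Big],
\end{equation*}
whose right-hand side is manifestly strictly positive when $0<x_2<x_1<1$ and $|a|<1$, and which (as a sanity check) vanishes in the degenerate cases $x_2=x_1$, where $\Lambda={\rm Id}$, and $x_1=1$, i.e.\ $t_1=0$, where $\Lambda=\Phi_a(t_2)$ lies on the CP boundary. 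Therefore $\Lambda(t_2,t_1)$ is not completely positive for any $t_2>t_1>0$, which is precisely the claimed eternal CP indivisibility; and since $t_z(t)=a(1-e^{-2\nu t})\neq 0$ for $a\neq 0$ and $t>0$, the map $\Phi_a(t)$ is non-unital. I expect the factorization identity to be the crux of the argument: it is short once the grouping by powers of $a^2$ is noticed, but finding it (and getting the boundary cases right) is where the care is needed, while the rest is routine substitution.
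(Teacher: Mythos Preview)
Your proof is correct, including the factorization identity, which I have checked term by term. However, your route differs from the paper's. The paper simply computes the decoherence rates via Eq.~\eqref{gammas-through-lambdas-and-tz}, obtaining $\gamma_\pm(t)=\nu(1\pm a)$ and $\gamma_z(t)=-\dfrac{\nu(1-a^2)\sinh 2\nu t}{2[1+a^2+(1-a^2)\cosh 2\nu t]}$ (your expression for $\gamma_z$ is the same after multiplying numerator and denominator by $e^{2\nu t}$), and then invokes the criterion stated in the Introduction that CP-divisibility is equivalent to non-negativity of all rates; since $\gamma_z(t)<0$ for every $t>0$, eternal CP-indivisibility follows in one line. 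You instead compute the finite propagator $\Lambda(t_2,t_1)$ explicitly and show that its parameters violate the second inequality in~\eqref{CP} for every pair $t_2>t_1>0$, via the algebraic factorization you display. What your approach buys is a strictly stronger conclusion: you establish that \emph{every} finite-interval intermediate map is non-CP, which is exactly the property announced in the opening sentence of Section~\ref{section-eternal-indivisible}, whereas the rate argument only directly yields non-CP of the infinitesimal propagator at each instant. The price is the polynomial factorization, which, as you note, is where the work lies; the paper's proof is essentially a two-line computation. Your aside on $\gamma_z$ already contains the paper's argument, so you might consider promoting it from ``instructive'' to the main line and keeping the finite-$\Lambda$ computation as a strengthening.
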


\begin{proof}
$\{\Phi_a(t)\}$ is a valid quantum dynamical map if $|a| < 1$
because the conditions~\eqref{CP} are fulfilled. A direct
calculation by Eq.~\eqref{gammas-through-lambdas-and-tz} yields
\begin{equation*}
\gamma_{\pm}(t) = \nu (1 \pm a), \quad \gamma_z(t) = - \frac{\nu
(1-a^2) \sinh 2 \nu t}{2 [1+a^2 + (1-a^2) \cosh 2 \nu t ]}.
\end{equation*}

\noindent If $|a|<1$, then $\gamma_z(t) < 0$ for all $t>0$, so
$\Phi_a(t)$ is eternal CP-indivisible.
\end{proof}

A feature of the dynamical map $\Phi_a(t)$ in
Proposition~\ref{prop-eternal-commutative} is that it is
commutative, i.e., $\Phi_a(t) \Phi_a(s) = \Phi_a (s) \Phi_a(t)$,
because the ratio $\frac{\gamma_+(t)}{\gamma_-(t)}$ is constant in
time. The following proposition shows that there also exist
non-commutative eternal CP-indivisible phase covariant qubit
processes.

\begin{proposition} \label{prop-eternal-non-commutative}
A phase covariant qubit dynamical map $\{\Phi_b(t)\}$ of the
form~\eqref{Phi-phase-covariant-qubit} with
\begin{equation*}
\lambda(t) = \frac{1}{2} \sqrt{(1 + e^{-2\nu t})^2 - b^2 e^{-2 \nu
t} (1 - e^{-2\nu t})^2}, \quad \lambda_z(t) = e^{-2\nu t}, \quad
t_z(t) = b e^{-\nu t} (1 - e^{-2\nu t}), \quad \nu>0,
\end{equation*}

\noindent is non-commutative and eternal CP-indivisible for all
real $b$ satisfying $0 < |b| \leq 1$.
\end{proposition}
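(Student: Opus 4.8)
The plan is to mirror the three-step structure used for Proposition~\ref{prop-eternal-commutative}: first establish that $\{\Phi_b(t)\}$ is a legitimate invertible phase covariant dynamical map by checking the complete positivity conditions~\eqref{CP} (trace preservation and phase covariance being automatic for the form~\eqref{Phi-phase-covariant-qubit}); then compute $\gamma_\pm(t)$ and $\gamma_z(t)$ from~\eqref{gammas-through-lambdas-and-tz} and show that $\gamma_z(t)<0$ for every $t>0$; and finally read off non-commutativity from the explicit affine action of $\Phi_b(t)$ on the Bloch ball.

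For the validity step I would substitute $u:=e^{-2\nu t}\in(0,1]$, so that $\lambda_z=u$, $t_z=b\sqrt{u}\,(1-u)$, and $4\lambda^2=(1+u)^2-b^2u(1-u)^2$. The second inequality in~\eqref{CP} then becomes an identity, $4\lambda^2+t_z^2=(1+u)^2=(1+\lambda_z)^2$, which is precisely what motivates the chosen form of $\lambda(t)$; moreover $4\lambda^2\ge(1+u)^2-(1-u)^2=4u>0$, so $\lambda(t)\neq0$ and $\Phi_b(t)$ is invertible. The first inequality, with $y:=\sqrt{u}$, reads $y^2+|b|\,y(1-y^2)\le1$, which for $|b|\le1$ follows because the left side is bounded by $y^2+y-y^3$, a quantity that is increasing on $[0,1]$ and equals $1$ at $y=1$. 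Together with $\Phi_b(0)=\mathrm{Id}$ (immediate from $\lambda(0)=\lambda_z(0)=1$, $t_z(0)=0$), this shows $\{\Phi_b(t)\}$ is a valid phase covariant dynamical map.

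Differentiating~\eqref{gammas-through-lambdas-and-tz} gives $\gamma_\pm(t)=\nu\pm\frac12 b\nu e^{-\nu t}(1+e^{-2\nu t})$, and evaluating $\frac14\frac{d}{dt}\ln(\lambda_z/\lambda^2)$ and factoring the resulting numerator yields $\gamma_z(t)=-\frac{\nu(1-u)(1+u-2b^2u^2)}{2[(1+u)^2-b^2u(1-u)^2]}$ with $u=e^{-2\nu t}$. The denominator is $4\lambda^2>0$; for $t>0$ one has $0<u<1$, hence $1-u>0$ and $1+u-2b^2u^2\ge 1+u-2u^2=(1-u)(1+2u)>0$, so $\gamma_z(t)<0$ for all $t>0$, i.e.\ the dynamics is eternally CP-indivisible. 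For non-commutativity, note that $\Phi_b(t)$ maps a Bloch vector ${\bf r}$ to $\mathrm{diag}(\lambda(t),\lambda(t),\lambda_z(t)){\bf r}$ with the $z$-component additionally shifted by $t_z(t)$; since the diagonal parts commute, $\Phi_b(t)\Phi_b(s)=\Phi_b(s)\Phi_b(t)$ holds iff $\lambda_z(t)t_z(s)+t_z(t)=\lambda_z(s)t_z(t)+t_z(s)$, i.e.\ iff $t_z(t)/(1-\lambda_z(t))$ is independent of $t$ on $(0,\infty)$. Here $t_z(t)/(1-\lambda_z(t))=b\,e^{-\nu t}$, which is non-constant whenever $b\neq0$, so $\Phi_b(t)$ is non-commutative for $0<|b|\le1$.

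The differentiation is routine; the one genuinely delicate point is the sign of $\gamma_z$, which rests on simplifying the numerator of $\frac{d}{dt}\ln(\lambda_z/\lambda^2)$ into the product $(1-u)(1+u-2b^2u^2)$ and then verifying $1+u-2b^2u^2>0$ uniformly for $u\in(0,1)$ and $|b|\le1$. This is the only place where the bound $|b|\le1$ (as opposed to the strict $|a|<1$ of Proposition~\ref{prop-eternal-commutative}) is actually used, so I would pay particular attention to the endpoint $|b|=1$, where the numerator degenerates to $(1-u)^2(1+2u)$ and remains strictly positive for $t>0$.
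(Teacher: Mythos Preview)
Your proposal is correct and follows essentially the same approach as the paper's proof: verify the CP conditions~\eqref{CP}, compute the rates via~\eqref{gammas-through-lambdas-and-tz}, show $\gamma_z(t)<0$ for all $t>0$, and check non-commutativity through the condition $t_z(t)[1-\lambda_z(s)]\neq t_z(s)[1-\lambda_z(t)]$. The only cosmetic difference is that the paper expresses $\gamma_\pm$ and $\gamma_z$ in terms of hyperbolic functions (e.g.\ $\gamma_\pm(t)=\nu(1\pm b e^{-2\nu t}\cosh\nu t)$), whereas you work throughout with the substitution $u=e^{-2\nu t}$; your factorisation of the numerator as $(1-u)(1+u-2b^2u^2)$ and the bound $1+u-2b^2u^2\ge(1-u)(1+2u)$ make the sign of $\gamma_z$ slightly more transparent than the paper's version.
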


\begin{proof}
$\{\Phi_b(t)\}$ is a valid quantum dynamical map if $|b| \leq 1$
because the conditions~\eqref{CP} are fulfilled. A direct
calculation by Eq.~\eqref{gammas-through-lambdas-and-tz} yields
\begin{equation*}
\gamma_{\pm}(t) = \nu \left( 1 \pm b e^{-2 \nu t} \cosh \nu t
\right), \quad \gamma_z(t) = - \frac{\nu (1 - e^{-2 \nu t})(e^{3
\nu t} \cosh \nu t - b^2)}{4 [e^{2 \nu t} \cosh^2 \nu t - b^2
\sinh^2 \nu t]}.
\end{equation*}

\noindent If $|b| \leq 1$, then $\gamma_z(t) < 0$ for all $t>0$,
so $\Phi_b(t)$ is eternal CP indivisible. Moreover, since $ t_z(t)
[1 - \lambda_z(s)] \neq t_z(s) [1 - \lambda_z(t)]$ for all $s > t
> 0$ and $b \neq 0$, the map $\Phi_b(t)$ is non-commutative for all $b$ satisfying $0 < |b|
\leq 1$.
\end{proof}

Trajectories of the processes $\Phi_a(t)$ and $\Phi_b(t)$ in the
parameter space $(\lambda,\lambda_z,t_z)$ belong to the surface of
the body depicted in Fig.~\ref{figure-2}. Note that the
constructed dynamical maps $\Phi_a(t)$ and $\Phi_b(t)$ reduce to
the known eternal CP-indivisible unital qubit
dynamics~\cite{hall-2014,megier-2016,fpmz-2017} if $a=0$ and
$b=0$, respectively.

\section{Positivity and positive divisibility} \label{section-P-divisibility}

Not only completely positive maps but also positive maps have
attracted some attention recently for description of quantum
systems and their dynamics~\cite{shaji-2005}. For this reason we
characterize a region of parameters $\lambda,\lambda_z,t_z$ within
which the map~\eqref{Phi-channel} is positive, i.e., $\Phi[X] \geq
0$ for all $X \geq 0$.

\begin{proposition} \label{prop-P}
Eq.~\eqref{Phi-channel} defines a positive map if and only if
\begin{equation} \label{positivity-conditions}
\left\{ \begin{array}{c}
  |\lambda_z| + |t_z| \leq 1, \\
  \quad 2|\lambda| \leq \sqrt{(1+\lambda_z)^2 - t_z^2} +
\sqrt{(1-\lambda_z)^2 - t_z^2}, \\
  \quad 4|\lambda_z| \leq \left[ \sqrt{(1+\lambda_z)^2 - t_z^2} +
\sqrt{(1-\lambda_z)^2 - t_z^2} \right]^2. \\
\end{array} \right.
\end{equation}
\end{proposition}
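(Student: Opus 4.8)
The plan is to reduce positivity of $\Phi$ to an elementary geometric containment, and then to match the three inequalities in \eqref{positivity-conditions} against the constraints that come out of a canonical (``doubly stochastic'') form of $\Phi$ produced by the quantum Sinkhorn theorem.

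First I would observe that, $\Phi$ being linear and trace preserving, it is positive iff it maps the Bloch ball into itself, and by \eqref{Phi-channel} the image of the Bloch ball is the solid ellipsoid of revolution $E$ with centre $(0,0,t_z)$ and semi-axes $|\lambda|,|\lambda|,|\lambda_z|$ (degenerating to a segment when $\lambda=0$ and to a flat disc when $\lambda_z=0$). Since $E$ only depends on $|\lambda|,|\lambda_z|,|t_z|$ and the reflections $x\mapsto-x$, $z\mapsto-z$ are unitary conjugations, one may assume $\lambda,\lambda_z,t_z\ge0$; and by the axial symmetry of $E$, $E\subseteq\{|{\bf r}|\le1\}$ is equivalent to $\max_{v\in[-1,1]}Q(v)\le1$ with $Q(v)=(\lambda_z^2-\lambda^2)v^2+2t_z\lambda_z v+\lambda^2+t_z^2$ the squared norm of the boundary point of $E$ at latitude $v=\sin\phi$. (Equivalently one may impose block positivity on the Choi matrix $\Omega_\Phi$, which is block diagonal with a single nontrivial $2\times2$ block built from $1+\lambda_z\pm t_z$ and $2\lambda$; the $\sigma_z$-covariance cuts the set of relevant product test vectors down to the same one-parameter family.)

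Next I would read off the conditions. The polar values $Q(\pm1)=(\lambda_z\pm t_z)^2$ yield precisely $|\lambda_z|+|t_z|\le1$. When $\lambda^2\le\lambda_z^2$, or when $\lambda^2>\lambda_z^2$ but the vertex $v_\ast=t_z\lambda_z/(\lambda^2-\lambda_z^2)$ of the downward parabola lies outside $[-1,1]$, the maximum sits at a pole and nothing is added; otherwise the maximum equals $Q(v_\ast)=\lambda^2+t_z^2+t_z^2\lambda_z^2/(\lambda^2-\lambda_z^2)$ and, after clearing the positive denominator, $Q(v_\ast)\le1\iff\lambda^2(\lambda^2-\lambda_z^2+t_z^2-1)+\lambda_z^2\le0$. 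With $A=\sqrt{(1+\lambda_z)^2-t_z^2}$, $B=\sqrt{(1-\lambda_z)^2-t_z^2}$ one has the identities $(A+B)^2=2+2\lambda_z^2-2t_z^2+2AB$ and $(2\lambda^2-1-\lambda_z^2+t_z^2)^2-(AB)^2=4\big[\lambda^2(\lambda^2-\lambda_z^2+t_z^2-1)+\lambda_z^2\big]$, which turn this into $4\lambda^2\le(A+B)^2$ — the second line of \eqref{positivity-conditions}. The third line $4|\lambda_z|\le(A+B)^2$ is the twin bound on the other semi-axis; although in this picture it already follows from $|\lambda_z|+|t_z|\le1$, it appears on an equal footing in the Sinkhorn derivation: positivity is invariant under $\Phi\mapsto\mathcal S\circ\Phi\circ\mathcal R$ with $\mathcal R[X]=RXR^\dagger$, $\mathcal S[X]=SXS^\dagger$, $R,S>0$ diagonal (keeping phase covariance), and by the quantum Sinkhorn theorem there is a unique such normalization turning $\Phi$ into a unital phase covariant map, necessarily of the form \eqref{Phi-channel} with $t_z=0$ and some semi-axes $(a,a,c)$, whose positivity is just $|a|\le1$, $|c|\le1$. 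The solvability of the scaling equations $\Phi[R^2]=S^{-2}$, $\Phi^{*}[S^2]=R^{-2}$ is exactly $|\lambda_z|+|t_z|\le1$, and substituting the explicit $a,c$ reproduces the last two lines of \eqref{positivity-conditions}.

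The hard part will be joint sufficiency — checking that \eqref{positivity-conditions} really do force $\max_{v\in[-1,1]}Q(v)\le1$ in every case. The one genuinely delicate regime is ``$\lambda$ dominant with interior vertex'': here one must rule out the sign pattern in which $4\lambda^2\le(A+B)^2$ holds but $Q(v_\ast)>1$, which I would do by noting that $Q(v_\ast)$ is monotone increasing in $\lambda^2$ on the relevant interval and that $Q(v_\ast)\big|_{\lambda^2=(1+\lambda_z^2-t_z^2)/2}\le1\iff(|\lambda_z|+|t_z|)^2\le1$. The degenerate cases $\lambda=0$ and $\lambda_z=0$ (where the ellipsoid collapses and the Sinkhorn scalings become singular) need a separate but immediate argument, and the boundary (non-strict) inequalities follow from the strict ones by continuity and a closure argument.
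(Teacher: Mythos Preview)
Your proposal is correct and takes a genuinely different route from the paper. The paper's proof is almost entirely a Sinkhorn argument: for $|\lambda_z|+|t_z|<1$ it quotes the explicit scalings from Refs.~\cite{fm-2017,ffk-2017,filippov-romp-2018} to produce the unital map $\Upsilon$ with semi-axes $\widetilde\lambda_x=\widetilde\lambda_y$ and $\widetilde\lambda_z$ as in Eq.~\eqref{tilde-lambda}, and then positivity of $\Phi$ is simply $|\widetilde\lambda_x|\le1$, $|\widetilde\lambda_z|\le1$; the boundary $|\lambda_z|+|t_z|=1$ is dispatched by a one-line geometric remark. Your main line of argument is instead a self-contained elementary optimization of the quadratic $Q(v)$ on $[-1,1]$, together with the algebraic identity $(2\lambda^2-1-\lambda_z^2+t_z^2)^2-(AB)^2=4\big[\lambda^2(\lambda^2-\lambda_z^2+t_z^2-1)+\lambda_z^2\big]$ that matches $Q(v_\ast)\le1$ to the second line of \eqref{positivity-conditions}; you only invoke Sinkhorn as a parallel heuristic to explain why the (redundant) third line is listed. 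What your approach buys is independence from the cited Sinkhorn formulae and the observation---not made in the paper---that the inequality $4|\lambda_z|\le(A+B)^2$ is already implied by $|\lambda_z|+|t_z|\le1$; what the paper's approach buys is brevity and a structural reason why all three conditions appear symmetrically as $|\widetilde\lambda_i|\le1$. Your monotonicity argument for the sub-case $C:=2\lambda^2-1-\lambda_z^2+t_z^2\le0$ with interior vertex is sound (indeed $dQ(v_\ast)/d(\lambda^2)=1-t_z^2\lambda_z^2/(\lambda^2-\lambda_z^2)^2>0$ exactly when $v_\ast<1$), and the check $Q(v_\ast)\big|_{\lambda^2=(1+\lambda_z^2-t_z^2)/2}\le1\Leftrightarrow(\lambda_z+t_z)^2\le1$ is correct. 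One small wording fix: the Sinkhorn scalings are guaranteed non-degenerate only for $|\lambda_z|+|t_z|<1$ (strict), not $\le1$; you already flag this in your closing remark on degenerate and boundary cases, so just make sure the earlier sentence matches.
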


\begin{proof}
Note that positivity of $\Phi$ is equivalent to condition
$\Phi[\varrho] \geq 0$ for all density matrices $\varrho$. We
consider two cases.

Suppose $|\lambda_z| + |t_z| = 1$. The image of the Bloch ball
under map~\eqref{Phi-channel} is an ellipsoid of revolution, which
has a common point with the Bloch sphere either at the north pole
or the south pole. Geometrically, the image ellipsoid is a subset
of the Bloch ball if and only if $\lambda^2 \leq |\lambda_z|$.

Suppose $|\lambda_z| + |t_z| < 1$, then there exist non-degenerate
operators $A$ and $B$ such that the map $\varrho \to
\Upsilon[\varrho] = A \Phi[B \varrho B^{\dag}] A^{\dag}$ is
unital, i.e.,
\begin{equation*}
\Upsilon[\varrho] = \frac{1}{2} \left( {\rm tr}[\varrho] I +
\widetilde{\lambda}_x {\rm tr}[\sigma_x \varrho]\sigma_x +
\widetilde{\lambda}_y {\rm tr}[\sigma_y \varrho]\sigma_y +
\widetilde{\lambda}_z {\rm tr}[\sigma_z \varrho]\sigma_z \right).
\end{equation*}

The explicit form of operators $A$ and $B$ as well as real numbers
$\widetilde{\lambda}_x,\widetilde{\lambda}_y,\widetilde{\lambda}_z$
are derived in Refs.~\cite{fm-2017,ffk-2017,filippov-romp-2018}.
In particular,
\begin{equation} \label{tilde-lambda}
\widetilde{\lambda}_x = \widetilde{\lambda}_y =
\frac{2\lambda}{\sqrt{(1+\lambda_z)^2 - t_z^2} +
\sqrt{(1-\lambda_z)^2 - t_z^2}}, \quad \widetilde{\lambda}_z =
\frac{4 \lambda_z}{\left[ \sqrt{(1+\lambda_z)^2 - t_z^2} +
\sqrt{(1-\lambda_z)^2 - t_z^2} \right]^2}.
\end{equation}

\noindent As operators $A$ and $B$ are non-degenerate,
$\Phi[\varrho] = A^{-1} \Upsilon[B^{-1} \varrho (B^{-1})^{\dag}]
(A^{-1})^{\dag}$. Therefore, $\Phi[\varrho] \geq 0$ if and only if
$\Upsilon$ is a positive map, i.e.,
$|\widetilde{\lambda}_x|,|\widetilde{\lambda}_y|,|\widetilde{\lambda}_z|
\leq 1$.

Finally, the conditions~\eqref{positivity-conditions} summarize
the two cases altogether.

\end{proof}

\begin{remark}
The relation $\Upsilon[\varrho] = A \Psi[B \varrho B^{\dag}]
A^{\dag}$ between a unital trace preserving map $\Upsilon$ and a
strictly positive map $\Psi$ (i.e., $\Psi[X] > 0$ for all $X \geq
0$) is known as the quantum Sinkhorn theorem and reviewed in
Refs.~\cite{gurvits-2004,aubrun-2015,aubrun-2017,fm-2017,ffk-2017,filippov-romp-2018,filippov-jms-2019}.
\end{remark}

\begin{figure}
\includegraphics[width=8cm]{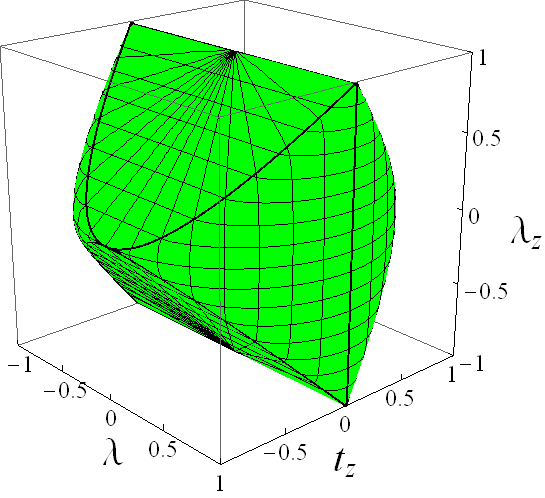}
\caption{\label{figure-4} Region of parameters
$(\lambda,\lambda_z,t_z)$ for which the map~\eqref{Phi-channel} is
positive.}
\end{figure}

Conditions~\eqref{positivity-conditions} are visualized in
Fig.~\ref{figure-4}. The
inequalities~\eqref{positivity-conditions} can be alternatively
reformulated in terms of the maximum distance between the center
of the Bloch ball and a point in the image ellipsoid. Namely,
\begin{equation} \label{r-max}
r_{\max}(\Phi) = \max_{\varrho, \, {\bf n} \in {\mathbb R}^3 \, :
\, |{\bf n}|=1} \frac{1}{2} {\rm tr} \Big[ (n_x \sigma_x + n_y
\sigma_y + n_z \sigma_z) \Phi[\varrho] \Big] = \left\{
\begin{array}{ll}
  |\lambda_z| + |t_z| & \text{if~}|\lambda| \leq |\lambda_z|, \\
  |\lambda| \sqrt{1 + \frac{t_z^2}{\lambda^2 - \lambda_z^2}} & \text{if~} |\lambda| > |\lambda_z|. \\
\end{array} \right.
\end{equation}

\noindent The map $\Phi$ is positive if and only if
$r_{\max}(\Phi) \leq 1$.

Proposition~\ref{prop-P} is a powerful tool that can be also
applied to the intermediate map~\eqref{Lambda-intermediate}. This
results in the following characterization of P-divisible phase
covariant qubit dynamical maps.

\begin{proposition} \label{prop-P-divisibility}
The master equation $\frac{d\varrho(t)}{dt} = L(t)[\varrho(t)]$
with the generator~\eqref{L-generator} defines a P-divisible
dynamical map $\Phi(t)$ if the decoherence rates satisfy
\begin{equation} \label{p-divisibility-conditions}
\gamma_{\pm}(t) \geq 0 \quad \text{and} \quad \sqrt{\gamma_+(t)
\gamma_-(t)} + 2 \gamma_z(t) > 0.
\end{equation}

\noindent $\Phi(t)$ is not P-divisible if either (i)
$\gamma_{+}(t) < 0$, or (ii) $\gamma_{-}(t) < 0$, or (iii)
$\gamma_{\pm}(t) \geq 0$ and $\sqrt{\gamma_+(t) \gamma_-(t)} + 2
\gamma_z(t) < 0$.
\end{proposition}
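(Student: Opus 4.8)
The plan is to reduce the P-divisibility question to an application of Proposition~\ref{prop-P} to the intermediate propagator $\Lambda(t_2,t_1) = \Phi(t_2)\Phi^{-1}(t_1)$. Since the composition of phase covariant maps is phase covariant, $\Lambda(t_2,t_1)$ is itself of the form~\eqref{Phi-channel} with parameters
\begin{equation*}
\lambda_\Lambda = \frac{\lambda(t_2)}{\lambda(t_1)}, \quad (\lambda_z)_\Lambda = \frac{\lambda_z(t_2)}{\lambda_z(t_1)}, \quad (t_z)_\Lambda = t_z(t_2) - \frac{\lambda_z(t_2)}{\lambda_z(t_1)} t_z(t_1),
\end{equation*}
which one reads off from $\Phi(t_2) = \Lambda(t_2,t_1)\Phi(t_1)$ using the affine action on the Bloch ball. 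P-divisibility means exactly that $r_{\max}(\Lambda(t_2,t_1)) \leq 1$ for all $t_2 \geq t_1 \geq 0$, by the reformulation~\eqref{r-max}. The strategy is then to pass to the infinitesimal limit $t_2 = t_1 + \varepsilon$: writing $\Lambda(t_1+\varepsilon,t_1) = \mathrm{Id} + \varepsilon L(t_1) + O(\varepsilon^2)$, I would expand $r_{\max}$ to first order in $\varepsilon$ and demand that the linear coefficient be $\leq 0$ whenever $r_{\max} = 1$ is approached, i.e. on the relevant part of the Bloch sphere.

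Concretely, I would carry this out as follows. First, establish that P-divisibility is equivalent to the condition that for every infinitesimal $\varepsilon$ the map $\Lambda(t+\varepsilon,t)$ is positive; this is the standard semigroup-type argument (composition of positive maps is positive, and any $\Lambda(t_2,t_1)$ factors into a product of infinitesimal ones). Second, I would use the two-branch formula~\eqref{r-max} for $r_{\max}(\Lambda(t+\varepsilon,t))$. The parameters of $\Lambda(t+\varepsilon,t)$ are $\lambda_\Lambda = 1 - 2\varepsilon(\tfrac14(\gamma_++\gamma_-)+\gamma_z) + O(\varepsilon^2)$, $(\lambda_z)_\Lambda = 1 - \varepsilon(\gamma_++\gamma_-)+O(\varepsilon^2)$, and $(t_z)_\Lambda = \varepsilon(\gamma_+-\gamma_-)+O(\varepsilon^2)$, obtained by differentiating~\eqref{gammas-through-lambdas-and-tz} or directly from $L(t)$. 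Since both $\lambda_\Lambda$ and $(\lambda_z)_\Lambda$ are close to $1$ while $(t_z)_\Lambda$ is of order $\varepsilon$, one has $|\lambda_\Lambda| > |(\lambda_z)_\Lambda|$ precisely when $\gamma_z(t) < 0$ (and the reverse when $\gamma_z(t) > 0$), so the two geometric branches of~\eqref{r-max} correspond exactly to the sign of $\gamma_z$. Third, in the branch $|\lambda| \leq |\lambda_z|$ (i.e. $\gamma_z \geq 0$), $r_{\max}(\Lambda) = |(\lambda_z)_\Lambda| + |(t_z)_\Lambda| = 1 - \varepsilon(\gamma_++\gamma_-) + \varepsilon|\gamma_+-\gamma_-| + O(\varepsilon^2)$; requiring this to be $\leq 1$ to first order gives $\gamma_+ + \gamma_- \geq |\gamma_+ - \gamma_-|$, i.e. $\gamma_\pm(t) \geq 0$, and then automatically $\sqrt{\gamma_+\gamma_-} + 2\gamma_z > 0$ provided not both $\gamma_\pm$ vanish. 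In the branch $|\lambda| > |\lambda_z|$ (i.e. $\gamma_z < 0$), I expand $r_{\max}(\Lambda) = |\lambda_\Lambda|\sqrt{1 + (t_z)_\Lambda^2/(\lambda_\Lambda^2 - (\lambda_z)_\Lambda^2)}$; here the ratio $(t_z)_\Lambda^2/(\lambda_\Lambda^2-(\lambda_z)_\Lambda^2)$ is $O(\varepsilon^2)/O(\varepsilon) = O(\varepsilon)$, so the square root contributes at first order, and a short computation of $\lambda_\Lambda^2 - (\lambda_z)_\Lambda^2 = \varepsilon(\gamma_++\gamma_- - 2(\tfrac12(\gamma_++\gamma_-)+2\gamma_z)) + O(\varepsilon^2) = -4\varepsilon\gamma_z + O(\varepsilon^2)$ together with $(t_z)_\Lambda^2 = \varepsilon^2(\gamma_+-\gamma_-)^2$ yields
\begin{equation*}
r_{\max}(\Lambda(t+\varepsilon,t)) = 1 - \varepsilon\Big[\tfrac12(\gamma_++\gamma_-) + 2\gamma_z + \tfrac{(\gamma_+-\gamma_-)^2}{8(-\gamma_z)}\cdot\big(\text{coefficient}\big)\Big] + O(\varepsilon^2),
\end{equation*}
and simplifying the bracket (completing a square of the form $\tfrac18(\sqrt{\gamma_+}-\sqrt{\gamma_-})^2\cdot(\cdots)$ against the $-\gamma_z$ denominator) gives the threshold $\sqrt{\gamma_+\gamma_-} + 2\gamma_z > 0$. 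The cases $\gamma_\pm(t) < 0$ are handled separately and more easily: if $\gamma_+(t_0) < 0$ then from $\gamma_+ = \tfrac{\lambda_z}{2}\frac{d}{dt}\big(\tfrac{1+t_z}{\lambda_z}\big)$ one checks that the north pole of the Bloch sphere is mapped outside the ball by $\Lambda(t_0+\varepsilon,t_0)$, and symmetrically for $\gamma_-$ at the south pole.

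The main obstacle I anticipate is the algebra in the $\gamma_z < 0$ branch: getting the first-order expansion of $r_{\max}$ right requires tracking the $O(\varepsilon)$ and $O(\varepsilon^2)$ terms of $\lambda_\Lambda^2 - (\lambda_z)_\Lambda^2$ carefully, since a leading cancellation makes this quantity only $O(\varepsilon)$ rather than $O(1)$, and it is precisely this near-degeneracy that produces the nontrivial $\sqrt{\gamma_+\gamma_-}$ combination rather than a linear condition. A secondary subtlety is that~\eqref{p-divisibility-conditions} is only claimed as sufficient (note the strict inequality and the ``if''), so on the boundary $\sqrt{\gamma_+\gamma_-} + 2\gamma_z = 0$ the infinitesimal analysis is inconclusive at first order and one must either appeal to higher-order terms or simply leave that measure-zero locus out, consistent with the statement; likewise one should be careful that the pointwise infinitesimal conditions genuinely integrate up to positivity of $\Lambda(t_2,t_1)$ for finite intervals, which follows from closedness of the cone of positive maps under composition.
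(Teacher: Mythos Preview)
Your approach is essentially the paper's: reduce P-divisibility to positivity of the infinitesimal propagator $\Lambda(t+\varepsilon,t)={\rm Id}+\varepsilon L(t)+O(\varepsilon^2)$, read off its phase covariant parameters, and apply Proposition~\ref{prop-P}. The one methodological difference is that the paper substitutes directly into the three inequalities~\eqref{positivity-conditions} rather than into the two-branch $r_{\max}$ formula; this sidesteps your branch bookkeeping and makes the $\sqrt{\gamma_+\gamma_-}$ appear in one line from $\sqrt{(1-\lambda_z')^2-(t_z')^2}=2\sqrt{\gamma_+\gamma_-}\,\varepsilon+O(\varepsilon^2)$. The paper does use the $r_{\max}$ formula for the converse case~(iii), exactly as you propose, obtaining $r_{\max}(\Lambda(t+\varepsilon,t))=1+\frac{2(4\gamma_z^2-\gamma_+\gamma_-)}{\gamma_++\gamma_--4\gamma_z}\,\varepsilon+o(\varepsilon)$.

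There is one algebraic slip you should fix: $\lambda_\Lambda^2-(\lambda_z)_\Lambda^2=\varepsilon(\gamma_++\gamma_--4\gamma_z)+O(\varepsilon^2)$, not $-4\varepsilon\gamma_z$ (you dropped a factor of $2$ on one of the two terms). Consequently the branch switch in~\eqref{r-max} occurs at $\gamma_z=\tfrac14(\gamma_++\gamma_-)$, not at $\gamma_z=0$. With this correction your expansion reproduces the paper's formula above and the threshold $\sqrt{\gamma_+\gamma_-}+2\gamma_z=0$ follows; without it the algebra will not close into the claimed condition.
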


\begin{proof}
$\Phi(t)$ is P-divisible if and only if the infinitesimal
propagator $\Lambda(t+dt,t)$ is positive for all $t \geq 0$. Since
$\Lambda(t+dt,t) = {\rm Id} + L(t) dt + o(dt)$, we conclude that
$\Lambda(t+dt,t)$ is positive if ${\rm Id} + L(t) dt$ is strictly
positive. The map ${\rm Id} + L(t) dt$ is phase covariant and has
the form of Eq.~\eqref{Phi-channel} with parameters
\begin{equation*}
\lambda'(t) = 1 - \frac{1}{2} (\gamma_+ + \gamma_- + 4 \gamma_z )
dt, \quad \lambda_z'(t) = 1 - (\gamma_+ + \gamma_- ) dt, \quad
t_z'(t) = (\gamma_+ - \gamma_-) dt.
\end{equation*}

\noindent Substituting these expressions for
$\lambda,\lambda_z,t_z$ in a strict version of
inequalities~\eqref{positivity-conditions} and keeping the terms
up to the first order of $dt>0$, we get the equivalent conditions
$\gamma_+(t) + \gamma_-(t) - |\gamma_+(t) - \gamma_-(t)| > 0$ and
$\sqrt{\gamma_+(t) \gamma_-(t)} + 2 \gamma_z(t) > 0$. In turn, the
first inequality is equivalent to $\gamma_{\pm}(t) > 0$. If
$\gamma_{+}(t) = 0$, $\gamma_{-}(t) \geq 0$, $\gamma_z(t) \geq 0$
or $\gamma_{-}(t) = 0$, $\gamma_{+}(t) \geq 0$, $\gamma_z(t) \geq
0$, then the map $\Lambda(t+dt,t)$ is completely positive and,
consequently, positive. Therefore,
conditions~\eqref{p-divisibility-conditions} guarantee positivity
of the intermediate map $\Lambda(t+dt,t)$.

Conversely, suppose $\gamma_{+}(t) < 0$ or $\gamma_{-}(t) < 0$,
then $\Lambda(t+dt,t)$ is not positive as at least one of the
operators $\Lambda(t+dt,t)[\frac{1}{2}(I \pm \sigma_z)]$ is not
positive semidefinite. Suppose $\gamma_{\pm}(t) \geq 0$ and
$\sqrt{\gamma_+(t) \gamma_-(t)} + 2 \gamma_z(t) < 0$, then
Eq.~\eqref{r-max} gives $r_{\max}(\Lambda(t+dt,t)) =  1 +
\frac{2[4\gamma_z^2(t) - \gamma_+(t) \gamma_-(t)]}{\gamma_+(t) +
\gamma_-(t) - 4 \gamma_z(t)}dt + o(dt)
> 1$ and $\Lambda(t+dt,t)$ is not positive.
\end{proof}

\begin{remark}
If $\gamma_{\pm}(t) \geq 0$ and $\sqrt{\gamma_+(t) \gamma_-(t)} +
2 \gamma_z(t) = 0$, then one should resort to higher order
expansions of $\Lambda(t+dt,t)$ with respect to $dt$. The second
order expansion yields the condition $\frac{d\gamma_z(t)}{dt} >
\gamma_z(t)[\gamma_+(t) + \gamma_-(t)]$ for P-divisibility of
$\Phi(t)$.
\end{remark}

The derived condition of
P-divisibility~\eqref{p-divisibility-conditions} is stronger than
the condition for monotonically decreasing distinguishability of
quantum states, $\gamma_+ + \gamma_- \geq 0$ and $\gamma_+ +
\gamma_- + 4 \gamma_z \geq 0$, Ref.~\cite{teittinen-2018}. This is
trivial if one of the rates $\gamma_+$ or $\gamma_-$ is negative.
However, even if $\gamma_{\pm} \geq 0$ we have $\gamma_+ +
\gamma_- \geq 2 \sqrt{\gamma_+ \gamma_-}$, so the P-divisibility
condition is stronger than the deceasing distinguishability
condition. In other words, there exists a phase covariant qubit
dynamics such that the trace distance
$\frac{1}{2}\|\Phi(t)[\varrho_1] - \Phi(t)[\varrho_2]\|_1$
monotonically decreases for arbitrary two initial states
$\varrho_1$ and $\varrho_2$, however, the dynamics is not
P-divisible.

\section{Admissible memory kernels} \label{section-memory-kernel}

In this section, we address the following question: what memory
kernels $K(t') : {\cal B}({\cal H}_d) \to {\cal B}({\cal H}_d)$ in
the master equation~\eqref{NZ} define a legitimate (completely
positive and trace preserving) phase covariant qubit dynamics
$\Phi(t)$? We provide a sufficient condition for memory kernels
and illustrate our findings by an example.

The relation between a dynamical map $\Phi(t)$ and the memory
kernel $K(t)$ takes the simplest form if we use the Laplace
transform $F_s = \int_0^{\infty} F(t) e^{-st} dt$. Indeed, the
Laplace transform of the equation $\frac{d \Phi(t)}{dt} = \int_0^t
K(t') [\Phi(t - t')] dt'$ yields $s \Phi_s - {\rm Id} = K_s
\Phi_s$, where we have taken into account that $\Phi(0) = {\rm
Id}$. Therefore, for a phase covariant qubit dynamics with the
memory kernel
\begin{equation} \label{K-kernel}
K(t)[\varrho] = \varkappa_+(t) \left( \sigma_+ \varrho \sigma_- -
\frac{1}{2} \{\varrho, \sigma_- \sigma_+ \} \right) +
\varkappa_-(t) \left( \sigma_- \varrho \sigma_+ - \frac{1}{2}
\{\varrho, \sigma_+ \sigma_- \} \right) + \varkappa_z(t) \left(
\sigma_z \varrho \sigma_z - \varrho \right)
\end{equation}

\noindent we get the following parameters $\lambda_s$,
$(\lambda_z)_s$, $(t_z)_s$ of $\Phi_s$:
\begin{equation} \label{lambdas-through-kappas}
\lambda_s = \frac{1}{s+\frac{1}{2}[(\varkappa_+)_s +
(\varkappa_-)_s + 4 (\varkappa_z)_s]} , \quad (\lambda_z)_s =
\frac{1}{s + (\varkappa_+)_s + (\varkappa_-)_s} , \quad (t_z)_s =
\frac{(\varkappa_+)_s - (\varkappa_-)_s}{s[s + (\varkappa_+)_s +
(\varkappa_-)_s]}.
\end{equation}

By Bernstein's theorem the Laplace transform $f_s$ of the
non-negative smooth function $f(t): [0,\infty) \rightarrow
[0,\infty)$ is completely monotone, i.e., $(-1)^n \frac{d^n
f_s}{ds^n} \geq 0$ for all $s \geq 0$ and $n=0,1,2,\ldots$. Hence,
we can characterize non-negative functions $f(t)$ in terms of
$f_s$. Applying this result to Eq.~\eqref{CP}, we conclude that
$\Phi(t)$ is completely positive if and only if both functions
$\frac{1}{s} - |\lambda_z|_s - |t_z|_s$ and $\frac{1}{s} + 2
(\lambda_z)_s + (\lambda_z^2)_s - 4 (\lambda^2)_s - (t_z^2)_s$ are
completely monotone. However, these conditions cannot be further
simplified in terms of the parameters $(\varkappa_+)_s$,
$(\varkappa_-)_s$, $(\varkappa_z)_s$ of the memory kernel. To
overcome this difficulty we resort to a subset of legitimate
quantum channels $\Phi$ that is given by linear restrictions on
parameters $\lambda,\lambda_z,t_z$.

\begin{lemma} \label{lemma}
Eq.~\eqref{Phi-channel} defines a completely positive and trace
preserving map $\Phi$ if
\begin{equation} \label{subset}
1 + 2\lambda + \lambda_z \pm t_z \geq 0, \quad 1 - 2\lambda +
\lambda_z \pm t_z \geq 0, \quad 1 - \lambda_z \pm t_z \geq 0.
\end{equation}
\end{lemma}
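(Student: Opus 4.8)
The natural approach is to verify that the conditions (\ref{subset}) are sufficient for complete positivity by exhibiting $\Phi$ as an explicit convex combination of extremal phase covariant channels whose complete positivity is evident. First I would identify the "vertices" of the polytope cut out by the six linear inequalities (\ref{subset}) in $(\lambda,\lambda_z,t_z)$-space. A quick look at the structure suggests the relevant building blocks: the identity channel $(\lambda,\lambda_z,t_z)=(1,1,0)$, the completely depolarizing-to-a-pole channels $\varrho\mapsto\tfrac12(I\pm\sigma_z){\rm tr}[\varrho]$ with parameters $(0,0,\pm 1)$, and a pair of "spin-flip"-type maps with $\lambda<0$, e.g. $(-1,1,0)$ (the rotation $\varrho\mapsto\sigma_x\varrho\sigma_x$) or its convex relatives. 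Each of these is manifestly CP, and each is phase covariant, so any convex mixture is a legitimate phase covariant channel.

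\medskip

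\noindent The key steps, in order, are: (i) write down the finite list of CP phase covariant channels $\Phi^{(j)}$ that I claim generate the polytope; (ii) check that each $\Phi^{(j)}$ satisfies (\ref{subset}) (so the polytope contains their convex hull) and, more importantly, that any $(\lambda,\lambda_z,t_z)$ obeying (\ref{subset}) can be written as $\sum_j p_j\,(\lambda^{(j)},\lambda_z^{(j)},t_z^{(j)})$ with $p_j\ge 0$, $\sum_j p_j=1$ — this is the content of the Minkowski/Weyl equivalence for the explicit polytope, and in practice amounts to solving a small linear system; (iii) conclude CP of $\Phi$ by linearity of the Choi map and convexity of the cone of positive semidefinite Choi matrices, and note that trace preservation is automatic for every map of the form (\ref{Phi-channel}). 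An alternative, perhaps cleaner, route that avoids guessing all the vertices is to exhibit an explicit Kraus decomposition: combine the standard amplitude-damping Kraus operators (weights tied to $t_z$ and the gap $1\pm\lambda_z$), a dephasing Kraus pair, and a $\sigma_x$-type Kraus operator, and verify by direct computation that the resulting channel realizes $(\lambda,\lambda_z,t_z)$ precisely when (\ref{subset}) holds; the six inequalities then reappear as the nonnegativity of the six Kraus weights.

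\medskip

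\noindent The main obstacle is step (ii): one must be careful that the six inequalities (\ref{subset}) genuinely carve out the convex hull of the chosen CP channels and not something smaller or larger. Concretely, the subtlety is the interplay between the sign of $\lambda$ and the $\pm t_z$ constraints — the inequalities $1\mp 2\lambda+\lambda_z\pm t_z\ge 0$ must be matched against vertices with both signs of $\lambda$, and one has to confirm that no spurious vertex (outside the CP region of Fig.~\ref{figure-2}, characterized by (\ref{CP})) is produced. I would double-check consistency by verifying that (\ref{subset}) implies (\ref{CP}): indeed $1-\lambda_z\pm t_z\ge 0$ together with $1-2\lambda+\lambda_z\pm t_z\ge0$ and $1+2\lambda+\lambda_z\pm t_z\ge0$ forces $|\lambda_z|+|t_z|\le 1$ and, after combining the $\lambda$-inequalities pairwise, $4\lambda^2\le(1+\lambda_z)^2-t_z^2\le(1+\lambda_z)^2$, so the polytope does sit inside the CP body as it must. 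Once the vertex list is pinned down, the remaining arithmetic — solving for the convex weights $p_j$ and checking their nonnegativity under (\ref{subset}) — is routine and I would not belabor it.
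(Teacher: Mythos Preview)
Your proposal is correct and takes essentially the same approach as the paper: identify the convex polytope cut out by \eqref{subset}, check that its vertices---$(1,1,0)$, $(-1,1,0)$, $(0,-1,0)$, $(0,0,\pm 1)$---all satisfy the CP conditions \eqref{CP}, and conclude by convexity of the set of CP maps. The paper's own proof is a two-line version of exactly this argument, without listing the vertices explicitly.

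One remark worth making: the direct verification you sketch at the end, that \eqref{subset} implies \eqref{CP} by elementary inequality manipulation, is actually a complete and self-contained proof of the lemma on its own---no vertex enumeration or Kraus construction needed. From $1-\lambda_z\pm t_z\ge 0$ and $1+\lambda_z-|t_z|\ge 2|\lambda|\ge 0$ one gets $|\lambda_z|+|t_z|\le 1$; and $4\lambda^2\le(1+\lambda_z-|t_z|)^2\le(1+\lambda_z-|t_z|)(1+\lambda_z+|t_z|)=(1+\lambda_z)^2-t_z^2$ gives the second CP inequality. This is arguably the cleanest route and bypasses the polytope bookkeeping entirely; the paper does not take it.
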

\begin{proof}
Conditions~\eqref{subset} define a convex polyhedron in the
parameter space $(\lambda,\lambda_z,t_z)$ with vertices (extremal
points) that satisfy restrictions~\eqref{CP}. Therefore,
conditions~\eqref{subset} define a convex hull of some quantum
channels.
\end{proof}

\begin{figure}
\includegraphics[width=8cm]{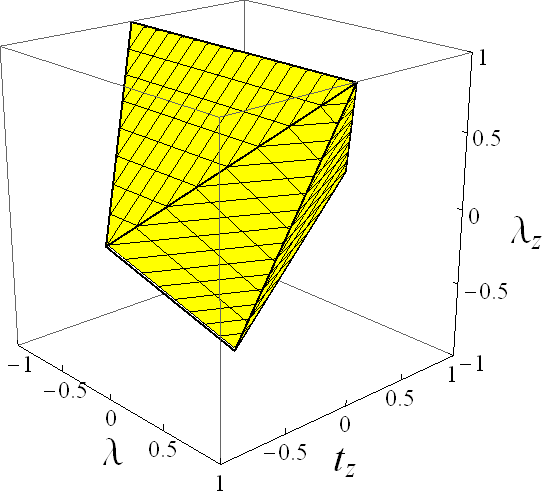}
\caption{\label{figure-5} A polyhedron in the parameter space
$(\lambda,\lambda_z,t_z)$ within which the map~\eqref{Phi-channel}
is completely positive by Lemma~\ref{lemma}.}
\end{figure}

Geometrically, conditions~\eqref{subset} define a body depicted in
Fig.~\ref{figure-5}. The edges connecting points
$(\lambda=1,\lambda_z=1,t_z=0)$ and
$(\lambda=0,\lambda_z=0,t_z=\pm 1)$ correspond to families of
shifted depolarizing channels. The edge connecting points
$(\lambda=1,\lambda_z=1,t_z=0)$ and
$(\lambda=0,\lambda_z=-1,t_z=0)$ comprises a trajectory of the
eternal CP-indivisible unital qubit process $\Phi_{a=0}(t) =
\Phi_{b=0}(t)$ from Section~\ref{section-eternal-indivisible}.

\begin{proposition} \label{prop-kernel}
The master equation $\frac{d \varrho(t)}{dt} = \int_0^t K(t')
[\varrho(t - t')] dt'$ with the memory kernel~\eqref{K-kernel}
defines a completely positive and trace preserving quantum
dynamics if six functions
\begin{equation} \label{kappa-conditions}
\frac{(\varkappa_{m})_s}{s[s+(\varkappa_+)_s + (\varkappa_-)_s]},
\quad \frac{s + (\varkappa_{m})_s}{s[s+(\varkappa_+)_s +
(\varkappa_-)_s]} \pm \frac{1}{s+\frac{1}{2}[(\varkappa_+)_s +
(\varkappa_-)_s + 4 (\varkappa_z)_s]}, \quad m=\pm,
\end{equation}

\noindent are all completely monotone.
\end{proposition}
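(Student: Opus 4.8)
The plan is to reduce the claim to the sufficient complete-positivity criterion of Lemma~\ref{lemma} and then to translate its six linear inequalities into the Laplace domain by Bernstein's theorem. At a fixed time $t$ the dynamical map $\Phi(t)$ inherits the phase covariant form~\eqref{Phi-channel} from the phase covariant kernel~\eqref{K-kernel}, so it is automatically trace preserving, and by Lemma~\ref{lemma} it is moreover completely positive as soon as the six functions $1+2\lambda(t)+\lambda_z(t)\pm t_z(t)$, $1-2\lambda(t)+\lambda_z(t)\pm t_z(t)$, and $1-\lambda_z(t)\pm t_z(t)$ are non-negative. Hence the proposition reduces to showing that these six functions of $t$ are non-negative for all $t\ge 0$. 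By the Bernstein--Widder theorem (already invoked above Lemma~\ref{lemma}) a function on $[0,\infty)$ is non-negative exactly when its Laplace transform is completely monotone, so it suffices to verify complete monotonicity of the Laplace transforms of the six combinations.

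The second step is to compute those Laplace transforms from~\eqref{lambdas-through-kappas} using linearity and $1\mapsto 1/s$. Writing $A=(\varkappa_+)_s+(\varkappa_-)_s$, a one-line manipulation of~\eqref{lambdas-through-kappas} gives
\begin{align*}
\frac{1}{s}-(\lambda_z)_s\pm(t_z)_s &= \frac{2(\varkappa_\pm)_s}{s\,[s+A]}, \qquad
\frac{1}{s}+(\lambda_z)_s\pm(t_z)_s = \frac{2\big(s+(\varkappa_\pm)_s\big)}{s\,[s+A]}, \\
2\lambda_s &= \frac{2}{\,s+\tfrac12\big[(\varkappa_+)_s+(\varkappa_-)_s+4(\varkappa_z)_s\big]\,}.
\end{align*}
Combining these, the Laplace transform of $1-\lambda_z(t)\pm t_z(t)$ equals twice the first family in~\eqref{kappa-conditions} with $m=\pm$, whereas the Laplace transform of $1\pm 2\lambda(t)+\lambda_z(t)+t_z(t)$ (respectively, of $1\pm 2\lambda(t)+\lambda_z(t)-t_z(t)$) equals twice the second family in~\eqref{kappa-conditions} taken with $m=+$ (respectively, $m=-$) and with its inner sign $\pm$. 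Thus the Laplace transforms of the six combinations in~\eqref{subset} are precisely twice the six functions listed in~\eqref{kappa-conditions}.

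To conclude, complete monotonicity is unaffected by multiplication by a positive constant, so the hypothesis that the six functions in~\eqref{kappa-conditions} are completely monotone is equivalent to complete monotonicity of the Laplace transforms of the six combinations in~\eqref{subset}; by Bernstein this forces those combinations to be non-negative on $[0,\infty)$, and Lemma~\ref{lemma} then yields that $\Phi(t)$ is completely positive and trace preserving for every $t\ge 0$.

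The middle step is routine partial-fraction bookkeeping. The one place that calls for care is the use of Bernstein's theorem: complete monotonicity of the functions in~\eqref{kappa-conditions} makes their inverse transforms non-negative and, in particular, real, so the parameters $\lambda(t),\lambda_z(t),t_z(t)$ recovered linearly from the six non-negative combinations are themselves real, which is what legitimises applying Lemma~\ref{lemma} pointwise in $t$. It should also be recorded that the condition is only sufficient, since Lemma~\ref{lemma} itself captures only a polyhedral subset of the full CP region~\eqref{CP}.
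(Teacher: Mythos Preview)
Your argument is correct and follows the paper's own proof essentially step for step: reduce to the polyhedral sufficient condition of Lemma~\ref{lemma}, pass to the Laplace domain via Bernstein's theorem, and identify the six transforms with the six expressions in~\eqref{kappa-conditions} using~\eqref{lambdas-through-kappas}. The only difference is that you make the partial-fraction bookkeeping and the use of the converse direction of Bernstein explicit, whereas the paper leaves these as one-line remarks.
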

\begin{proof}
$\Phi(t)$ is trace preserving by construction. $\Phi(t)$ is
completely positive if non-negativity conditions~\eqref{subset}
are satisfied. In the Laplace domain, this corresponds to complete
monotonicity of six functions $\frac{1}{s} + 2\lambda_s +
(\lambda_z)_s \pm (t_z)_s$, $\frac{1}{s} - 2\lambda_s +
(\lambda_z)_s \pm (t_z)_s$, $\frac{1}{s} - (\lambda_z)_s \pm
(t_z)_s$. Using the relation~\eqref{lambdas-through-kappas}, we
get functions~\eqref{kappa-conditions}.
\end{proof}

Following the idea of Ref.~\cite{chruscinski-2015}, we construct
an example illustrating Proposition~\ref{prop-kernel}.

\begin{example}
Let $a \geq a_{\pm} >0$ and $f(t)$ be a real-valued function such
that $0 \leq \int_0^{t} f(t') dt' \leq 2(a + a_{\pm})^{-1}$ for
all $t \geq 0$. The memory kernel~\eqref{K-kernel} with
coefficients
\begin{equation*}
(\varkappa_{\pm})_s = \frac{a_{\pm} s f_s}{1 - (a_+ + a_-)f_s},
\quad (\varkappa_z)_s = \frac{s f_s [2a - a_+ - a_- - a(a_+ +
a_-)f_s ]}{4(1 - a f_s)[1 - (a_+ + a_-)f_s]}
\end{equation*}

\noindent defines a legitimate master equation $\frac{d
\varrho(t)}{dt} = \int_0^t K(t') [\varrho(t - t')] dt'$.
Trajectories of this dynamical process in the parameter space
$(\lambda,\lambda_z,t_z)$ are given by segments of straight lines
because $\lambda_s = \frac{1}{s}(1-af_s)$, $(\lambda_z)_s =
\frac{1}{s}[1-(a_+ + a_-)f_s]$, $(t_z)_s = \frac{1}{s}(a_+ -
a_-)f_s$ and
\begin{equation*}
\lambda(t) = 1 - a \int_0^t f(t')dt', \quad \lambda_z(t) = 1 -
(a_+ + a_-) \int_0^t f(t')dt', \quad t_z(t) = (a_+ - a_-) \int_0^t
f(t')dt'.
\end{equation*}

\noindent Using the
relation~\eqref{gammas-through-lambdas-and-tz}, we conclude that
$\gamma_z(t) < 0$ and the dynamics is not CP-divisible if (i)
$f(t)<0$ or (ii) $f(t)
> 0$ and $a (a_+ + a_-) \int_0^t f(t')dt' > 2a - a_+ - a_-$ for
some $t>0$.
\end{example}

\section{Conclusions} \label{section-conclusions}

We have considered properties of time-local and convolution master
equations describing a phase covariant qubit dynamics. We have
characterized the parameters $\lambda$, $\lambda_z$, $t_z$ that
can be attained in semigroup dynamics with constant decoherence
rates (Proposition~\ref{prop-C-L} and Remark~\ref{remark-1}). We
have proved that this region of parameters cannot be significantly
extended if one allows for time-dependent non-negative rates
$\gamma_{\pm}(t)$ and $\gamma_z(t)$ in a time-local master
equation (Proposition~\ref{prop-C-CP} and Remarks~\ref{remark-2}
and \ref{remark-3}). We have clarified that the population can be
a non-monotonic function of time in a CP-divisible phase covariant
dynamics without regard to an initial system state
(Proposition~\ref{prop-non-monotonicity}). Then we have extended
the class of eternal CP-indivisible dynamics by presenting a
family of non-unital commutative dynamical maps
(Proposition~\ref{prop-eternal-commutative}) and a family of
non-unital non-commutative dynamical maps
(Proposition~\ref{prop-eternal-non-commutative}). The main results
of the paper are the positivity condition for a phase covariant
map (Proposition~\ref{prop-P}) obtained with the help of the
quantum Sinkhorn theorem and the condition for positive
divisibility (Proposition~\ref{prop-P-divisibility}). Finally, we
have considered a subset of completely positive phase covariant
qubit maps (Lemma~\ref{lemma}) with linear inequalities on
$\lambda$, $\lambda_z$, $t_z$ that we further used to specify a
class of admissible memory kernels in the convolution master
equation describing a phase covariant qubit dynamics.

The revealed divisibility properties have a close relation to
non-Markovianity of the system dynamics and an extended
system--ancilla dynamics, however, a discussion of this relation
is beyond the scope of this paper. Moreover, the same reduced
dynamics of the system $\Phi(t)$ may be caused by completely
different physical environments, and such a difference can be
revealed by interventions into the system dynamics, e.g., by
performing projective measurements on the system during the
evolution~\cite{pollock-2018,milz-2019,lvgf-2019}.

\begin{acknowledgements}
The authors thank Vladimir Frizen, David Davalos, Henri Lyyra,
Jose Teittinen, and Jyrki Piilo for fruitful discussions. The
study was supported by the Russian Science Foundation, project no.
19-11-00086.
\end{acknowledgements}


\begin{thebibliography}{10}

\bibitem{holevo-st}
 A.~S.~Holevo, \emph{Statistical Structure of
Quantum Theory} (Springer, Berlin, 2001).

\bibitem{davies-1976}
 E.~B.~Davies, \emph{Quantum Theory of Open Systems}
(Academic Press, London, 1976).

\bibitem{breuer-book}
 H.-P.~Breuer and F.~Petruccione, \emph{The Theory
of Open Quantum Systems} (Oxford University Press, Oxford, 2002).

\bibitem{holevo-book}
 A.~S.~Holevo, \emph{Quantum Systems, Channels,
Information. A Mathematical Introduction} (de Gruyter,
Berlin/Boston, 2012).

\bibitem{heinosaari-ziman}
 T.~Heinosaari and M.~Ziman, \emph{The Mathematical
Language of Quantum Theory} (Cambridge Univ. Press, Cambridge,
2012).

\bibitem{luchnikov-2019}
 I.~A.~Luchnikov, S.~V.~Vintskevich, H.~Ouerdane,
and S.~N.~Filippov, \textquotedblleft Simulation complexity of
open quantum dynamics: Connection with tensor
networks,\textquotedblright\, Phys. Rev. Lett. {\bf 122}, 160401
(2019).

\bibitem{alicki-book}
 R.~Alicki and K.~Lendi, \emph{Quantum Dynamical
Semigroups and Applications} (Springer, Berlin, 1987).

\bibitem{gks-1976}
 V.~Gorini, A.~Kossakowski, and
E.~C.~G.~Sudarshan, \textquotedblleft Completely positive
dynamical semigroups of $N$-level systems,\textquotedblright\, J.
Math. Phys. {\bf 17}, 821--825 (1976).

\bibitem{lindblad-1976}
 G.~Lindblad, \textquotedblleft On the generators
of quantum dynamical semigroups,\textquotedblright\, Commun. Math.
Phys. {\bf 48}, 119--130 (1976).

\bibitem{van-hove-1954}
 L.~van Hove, \textquotedblleft
Quantum-mechanical perturbations giving rise to a statistical
transport equation,\textquotedblright\, Physica {\bf 21}, 517--540
(1954).

\bibitem{davies-1974}
 E.~B.~Davies, \textquotedblleft Markovian master
equations,\textquotedblright\, Commun. Math. Phys. {\bf 39},
91--110 (1974).

\bibitem{palmer-1977}
 P.~F.~Palmer, \textquotedblleft The singular
coupling and weak coupling limits,\textquotedblright\, J. Math.
Phys. {\bf 18}, 527--529 (1977).

\bibitem{gorini-1978}
 V.~Gorini, A.~Frigerio, M.~Verri,
A.~Kossakowski, E.~C.~G.~Sudarshan, \textquotedblleft Properties
of quantum Markovian master equations,\textquotedblright\, Rep.
Math. Phys. {\bf 13}, 149--173 (1978).

\bibitem{accardi-book}
 L.~Accardi, Y.~G.~Lu, and I.~Volovich,
\emph{Quantum Theory and Its Stochastic Limit} (Springer, Berlin,
2002).

\bibitem{dumcke-1985}
 R.~D\"{u}mcke, \textquotedblleft The low density
limit for an $N$-level system interacting with a free Bose or
Fermi gas,\textquotedblright\, Commun. Math. Phys. {\bf 97},
331--359 (1985).

\bibitem{accardi-1991}
 L.~Accardi and Y.~G.~Lu, \textquotedblleft The
low-density limit of quantum systems,\textquotedblright\, J. Phys.
A: Math. Gen. {\bf 24}, 3483--3512 (1991).

\bibitem{apv-2002}
 L.~Accardi, A.~N.~Pechen, I.~V.~Volovich,
\textquotedblleft Quantum stochastic equation for the low density
limit,\textquotedblright\, J. Phys. A: Math. Gen. {\bf 35},
4889--4902 (2002).

\bibitem{hornberger-2007}
 K.~Hornberger, \textquotedblleft Monitoring
approach to open quantum dynamics using scattering
theory,\textquotedblright\, EPL {\bf 77}, 50007 (2007).

\bibitem{smirne-2010}
 A.~Smirne and B.~Vacchini, \textquotedblleft
Quantum master equation for collisional dynamics of massive
particles with internal degrees of freedom,\textquotedblright\,
Phys. Rev. A {\bf 82}, 042111 (2010).

\bibitem{rau-1963}
 J.~Rau, \textquotedblleft Relaxation phenomena
in spin and harmonic oscillator systems,\textquotedblright\, Phys.
Rev. {\bf 129}, 1880--1888 (1963).

\bibitem{scarani-2002}
 V.~Scarani, M.~Ziman, P.~\v{S}telmachovi\v{c},
N.~Gisin, and V.~Bu\v{z}ek, \textquotedblleft Thermalizing quantum
machines: Dissipation and entanglement,\textquotedblright\, Phys.
Rev. Lett. {\bf 88}, 097905 (2002).

\bibitem{luchnikov-2017}
 I.~A.~Luchnikov and S.~N.~Filippov,
\textquotedblleft Quantum evolution in the stroboscopic limit of
repeated measurements,\textquotedblright\, Phys. Rev. A {\bf 95},
022113 (2017).

\bibitem{filippov-2019}
 S. N. Filippov, G. N. Semin, A. N. Pechen,
\textquotedblleft Irreversible quantum dynamics for a system with
gas environment in the low-density limit and the semiclassical
collision model,\textquotedblright\, arXiv:1908.11202 [quant-ph].

\bibitem{exner-book}
 P. Exner, \emph{Open Quantum Systems and Feynman
Integrals}, corollary 2.4.10 (Reidel, Dordrecht, 1985).

\bibitem{holevo-1993}
 A.~S.~Holevo, \textquotedblleft A note on
covariant dynamical semigroups,\textquotedblright\, Rep. Math.
Phys. {\bf 32}, 211--216 (1993).

\bibitem{holevo-1996}
 A.~S.~Holevo, \textquotedblleft Covariant
quantum Markovian evolutions,\textquotedblright\, J. Math. Phys.
{\bf 37}, 1812--1832 (1996).

\bibitem{fk-2019}
 S.~N.~Filippov and K.~V.~Kuzhamuratova,
\textquotedblleft Quantum informational properties of the
Landau-Streater channel,\textquotedblright\, J. Math. Phys. {\bf
60}, 042202 (2019).

\bibitem{vacchini-2010}
 B.~Vacchini, \textquotedblleft Covariant
mappings for the description of measurement, dissipation and
decoherence in quantum mechanics,\textquotedblright\, Lecture
Notes in Physics {\bf 787}, 39--77 (2010).

\bibitem{ruskai-2002}
 M.~B.~Ruskai, S.~Szarek, and E.~Werner,
\textquotedblleft An analysis of completely-positive
trace-preserving maps on $M_2$, \textquotedblright\, Linear
Algebra Appl. {\bf 347}, 159--187 (2002).

\bibitem{smirne-2016}
 A.~Smirne, J.~Ko{\l}ody\'{n}ski, S.~F.~Huelga,
and R.~Demkowicz-Dobrza\'{n}ski, \textquotedblleft Ultimate
precision limits for noisy frequency
estimation,\textquotedblright\, Phys. Rev. Lett. {\bf 116}, 120801
(2016).

\bibitem{lankinen-2016}
 J.~Lankinen, H.~Lyyra, B.~Sokolov, J.~Teittinen,
B.~Ziaei, and S.~Maniscalco, \textquotedblleft Complete
positivity, finite-temperature effects, and additivity of noise
for time-local qubit dynamics,\textquotedblright\, Phys. Rev. A
{\bf 93}, 052103 (2016).

\bibitem{haase-2018}
 J.~F.~Haase, A.~Smirne, J.~Ko{\l}ody\'{n}ski,
R.~Demkowicz-Dobrza\'{n}ski, and S.~F.~Huelga, \textquotedblleft
Fundamental limits to frequency estimation: a comprehensive
microscopic perspective,\textquotedblright\, New J. Phys. {\bf
20}, 053009 (2018).

\bibitem{teittinen-2018}
 J.~Teittinen, H.~Lyyra, B.~Sokolov, and
S.~Maniscalco, \textquotedblleft Revealing memory effects in
phase-covariant quantum master equations,\textquotedblright\, New
J. Phys. {\bf 20}, 073012 (2018).

\bibitem{haase-2019}
 J.~F.~Haase, A.~Smirne, and S.~F.~Huelga,
\textquotedblleft Non-monotonic population and coherence evolution
in Markovian open-system dynamics,\textquotedblright\, in
\emph{Advances in Open Systems and Fundamental Tests of Quantum
Mechanics} edited by B.~Vacchini, H.-P.~Breuer, and A.~Bassi,
Springer Proceedings in Physics {\bf 237}, 41--57 (2019).

\bibitem{rivas-2014}
 \'{A}.~Rivas, S.~F.~Huelga, and M.~B.~Plenio,
\textquotedblleft Quantum non-Markovianity: characterization,
quantification and detection,\textquotedblright\, Rep. Prog. Phys.
{\bf 77}, 094001 (2014).

\bibitem{breuer-2016}
 H.-P.~Breuer, E.-M.~Laine, J.~Piilo, and
B.~Vacchini, \textquotedblleft Colloquium: Non-Markovian dynamics
in open quantum systems,\textquotedblright\, Rev. Mod. Phys. {\bf
88}, 021002 (2016).

\bibitem{de-vega-2017}
 I.~de~Vega and D.~Alonso, \textquotedblleft
Dynamics of non-Markovian open quantum
systems,\textquotedblright\, Rev. Mod. Phys. {\bf 89}, 015001
(2017).

\bibitem{benatti-2017}
 F.~Benatti, D.~Chru\'{s}ci\'{n}ski, and
S.~Filippov, \textquotedblleft Tensor power of dynamical maps and
positive versus completely positive
divisibility,\textquotedblright\, Phys. Rev. A {\bf 95}, 012112
(2017).

\bibitem{li-2018}
 L.~Li, M.~J.~W.~Hall, and H.~M.~Wiseman,
\textquotedblleft Concepts of quantum non-Markovianity: A
hierarchy,\textquotedblright\, Phys. Rep. {\bf 759}, 1 (2018).

\bibitem{pollock-2018}
 F.~A.~Pollock, C.~Rodr  guez-Rosario,
T.~Frauenheim, M.~Paternostro, and K.~Modi, \textquotedblleft
Operational Markov condition for quantum
processes,\textquotedblright\, Phys. Rev. Lett. {\bf 120}, 040405
(2018).

\bibitem{milz-2019}
 S.~Milz, M.~S.~Kim, F.~A.~Pollock, and K.~Modi,
\textquotedblleft Completely positive divisibility does not mean
Markovianity,\textquotedblright\, Phys. Rev. Lett. {\bf 123},
040401 (2019).

\bibitem{li-2019}
 C.-F.~Li, G.-C.~Guo, and J.~Piilo,
\textquotedblleft Non-Markovian quantum dynamics: What does it
mean?\textquotedblright\, EPL {\bf 127}, 50001 (2019).

\bibitem{ithier-2005}
 G.~Ithier, E.~Collin, P.~Joyez, P.~J.~Meeson,
D.~Vion, D.~Esteve, F.~Chiarello, A.~Shnirman, Y.~Makhlin,
J.~Schriefl, and G. Sch\"{o}n, \textquotedblleft Decoherence in a
superconducting quantum bit circuit,\textquotedblright\, Phys.
Rev. B {\bf 72}, 134519 (2005).

\bibitem{chernyavskiy}
 A.~Y.~Chernyavskiy, \textquotedblleft On the
fidelity of quantum gates under T1 and T2 relaxation,
\textquotedblright\, Proc. SPIE {\bf 11022}, 110222P (2019).

\bibitem{wolf-prl-2008}
 M.~M.~Wolf, J.~Eisert, T.~S.~Cubitt, and
J.~I.~Cirac, \textquotedblleft  Assessing non-Markovian quantum
dynamics,\textquotedblright\, Phys. Rev. Lett. {\bf 101}, 150402
(2008).

\bibitem{rivas-2010}
 \'{A}.~Rivas, S.~F.~Huelga, and M.~B.~Plenio,
\textquotedblleft Entanglement and non-Markovianity of quantum
evolutions,\textquotedblright\, Phys. Rev. Lett. {\bf 105}, 050403
(2010).

\bibitem{hall-2014}
 M.~J.~W.~Hall, J.~D.~Cresser, L.~Li, and
E.~Andersson, \textquotedblleft Canonical form of master equations
and characterization of non-Markovianity,\textquotedblright\,
Phys. Rev. A {\bf 89}, 042120 (2014).

\bibitem{megier-2016}
 N. Megier, D. Chru\'{s}ci\'{n}ski, J. Piilo, and
W. T. Strunz, \textquotedblleft Eternal non-Markovianity: from
random unitary to Markov chain realisations,\textquotedblright\,
Sci. Rep. {\bf 7}, 6379 (2017).

\bibitem{fpmz-2017}
 S.~N.~Filippov, J.~Piilo, S.~Maniscalco and
M.~Ziman, \textquotedblleft Divisibility of quantum dynamical maps
and collision models,\textquotedblright\, Phys. Rev. A {\bf 96},
032111 (2017).

\bibitem{bengtsson-2006}
 I.~Bengtsson and K.~\.{Z}yczkowski,
\emph{Geometry of Quantum States. An Introduction to Quantum
Entanglement} (Cambridge University Press, New York, 2006).

\bibitem{gurvits-2004}
 L.~Gurvits, \textquotedblleft Classical
complexity and quantum entanglement,\textquotedblright\, J.
Comput. Syst. Sci. {\bf 69}, 448--484 (2004).

\bibitem{aubrun-2015}
 G.~Aubrun and S.~J.~Szarek, \textquotedblleft
Two proofs of St{\o}rmer's theorem,\textquotedblright\,
arXiv:1512.03293 [math.FA] (2015).

\bibitem{aubrun-2017}
 G.~Aubrun and S.~J.~Szarek, \emph{Alice and Bob
Meet Banach: The Interface of Asympototic Geometry Analysis and
Quantum Information Theory}, section 2.4.3 (American Mathematical
Society, 2017).

\bibitem{fm-2017}
 S.~N.~Filippov and K.~Y.~Magadov,
\textquotedblleft Positive tensor products of maps and
$n$-tensor-stable positive qubit maps,\textquotedblright\, J.
Phys. A: Math. Theor. {\bf 50}, 055301 (2017).

\bibitem{ffk-2017}
 S.~N.~Filippov, V.~V.~Frizen and D.~V.~Kolobova,
\textquotedblleft Ultimate entanglement robustness of two-qubit
states against general local noises,\textquotedblright\, Phys.
Rev. A {\bf 97}, 012322 (2018).

\bibitem{filippov-romp-2018}
 S.~N.~Filippov, \textquotedblleft Lower and
upper bounds on nonunital qubit channel
capacities,\textquotedblright\, Rep. Math. Phys. {\bf 82},
149--159 (2018).

\bibitem{filippov-jms-2019}
 S.~N.~Filippov, \textquotedblleft Quantum
mappings and characterization of entangled quantum
states,\textquotedblright\, J. Math. Sci. {\bf 241}, 210--236
(2019).

\bibitem{blp}
 E.-M.~Laine, J.~Piilo, and H.-P.~Breuer,
\textquotedblleft Measure for the non-Markovianity of quantum
processes,\textquotedblright\, Phys. Rev. A {\bf 81}, 062115
(2010).

\bibitem{lorenzo-2013}
 S.~Lorenzo, F.~Plastina, and M.~Paternostro,
\textquotedblleft Geometrical characterization of
non-Markovianity,\textquotedblright\, Phys. Rev. A {\bf 88},
020102(R) (2013).

\bibitem{muller-hermes-2017}
 A.~M\"{u}ller-Hermes and D.~Reeb,
\textquotedblleft Monotonicity of the quantum relative entropy
under positive maps,\textquotedblright\, Ann. Henri Poincare {\bf
18}, 1777--1788 (2017).

\bibitem{nakajima-1958}
 S.~Nakajima, \textquotedblleft On quantum theory
of transport phenomena: Steady diffusion,\textquotedblright\,
Prog. Theor. Phys. {\bf 20}, 948--959 (1958).

\bibitem{zwanzig-1960}
 R.~Zwanzig, \textquotedblleft Ensemble method in
the theory of irreversibility,\textquotedblright\, J. Chem. Phys.
{\bf 33}, 1338--1341 (1960).

\bibitem{chruscinski-2019}
 D.~Chru\'{s}ci\'{n}ski, \textquotedblleft
Conditions for legitimate memory kernel master
equation,\textquotedblright\, in \emph{Classical and Quantum
Physics} edited by G.~Marmo, D.~Mart\'{\i}n de Diego,
M.~C.~Mu\~{n}oz Lecanda, Springer Proceedings in Physics {\bf
229}, 147--162 (2019).

\bibitem{chruscinski-2015}
 F.~A.~Wudarski, P.~Nale\.{z}yty, G.~Sarbicki,
and D.~Chru\'{s}ci\'{n}ski, \textquotedblleft Admissible memory
kernels for random unitary qubit evolution,\textquotedblright\,
Phys. Rev. A {\bf 91}, 042105 (2015).

\bibitem{chruscinski-2016}
 D.~Chru\'{s}ci\'{n}ski and A. Kossakowski,
\textquotedblleft Sufficient conditions for a memory-kernel master
equation,\textquotedblright\, Phys. Rev. A {\bf 94}, 020103(R)
(2016).

\bibitem{chruscinski-2017}
 D.~Chru\'{s}ci\'{n}ski and A. Kossakowski,
\textquotedblleft Generalized semi-Markov quantum
evolution,\textquotedblright\, Phys. Rev. A {\bf 95}, 042131
(2017).

\bibitem{chruscinski-2-2017}
 K. Siudzi\'{n}ska and D.~Chru\'{s}ci\'{n}ski,
\textquotedblleft Memory kernel approach to generalized Pauli
channels: Markovian, semi-Markov, and beyond,\textquotedblright\,
Phys. Rev. A {\bf 96}, 022129 (2017).

\bibitem{fc-2018}
 S.~N.~Filippov and D.~Chru\'{s}ci\'{n}ski,
\textquotedblleft Time deformations of master
equations,\textquotedblright\, Phys. Rev. A {\bf 98}, 022123
(2018).

\bibitem{davalos}
 D.~Davalos, M.~Ziman, and C.~Pineda,
\textquotedblleft Divisibility of qubit channels and dynamical
maps,\textquotedblright\, Quantum {\bf 3}, 144 (2019).

\bibitem{puchala}
 Z. Pucha{\l}a, {\L}. Rudnicki, and K.
\.{Z}yczkowski, \textquotedblleft Pauli semigroups and
unistochastic quantum channels, \textquotedblright\, Phys. Lett. A
{\bf 383}, 2376--2381 (2019).

\bibitem{nielsen}
 M. A. Nielsen and I. L. Chuang, \emph{Quantum
Computation and Quantum Information}, section 8.3.5 (Cambridge
University Press, Cambridge, 2000).

\bibitem{shaji-2005}
 A.~Shaji and E.~C.~G.~Sudarshan,
\textquotedblleft Who's afraid of not completely positive
maps?\textquotedblright\, Phys. Lett. A {\bf 341}, 48--54 (2005).

\bibitem{lvgf-2019}
 I.~A.~Luchnikov, S.~V.~Vintskevich,
D.~A.~Grigoriev, and S.~N.~Filippov, \textquotedblleft Machine
learning non-Markovian quantum dynamics,\textquotedblright\,
arXiv:1902.07019 [quant-ph] (2019).

\end{thebibliography}
\end{document}